\documentclass{optica-article}

\journal{opticajournal} 

\articletype{Research Article}

\usepackage[capitalise]{cleveref}
\usepackage{hyperref}

\usepackage[labelformat=simple]{subcaption}

\usepackage{tikz}

\usepackage[export]{adjustbox}


\tikzset{>=latex} 
\usepackage{ifthen}
\usepackage{xcolor}
\usepackage{siunitx}

\newtheorem{theorem}{Theorem}[section]

\usepackage{mathtools}

\usepackage{extarrows} 

\usepackage{float}

\usepackage{lineno}


\definecolor{reblack}{RGB}{0,0,0} 

\usepackage{booktabs}

\usepackage{multirow}

\begin{document}

\title{DaISy: Diffuser-aided Sub-THz Imaging System}

\author{
Shao-Hsuan Wu,\authormark{1,$\dag$} 
Yiyao Zhang,\authormark{1,2,3,$\dag$}
Ke Chen,\authormark{3,4} and 
Shang-Hua Yang\authormark{1,5,6,*}}

\address{
\authormark{1}Institute of Electronics Engineering, National Tsing Hua University, Hsinchu, 30013, Taiwan \\

\authormark{2}Department of Mathematical Sciences, University of Liverpool, Liverpool, L69 7ZL, UK \\

\authormark{3}Centre for Mathematical Imaging Techniques, University of Liverpool, Liverpool, L69 7ZL, UK \\

\authormark{4}Department of Mathematics and Statistics, University of Strathclyde, Glasgow, G1 1XH, UK \\

\authormark{5}Department of Electrical Engineering, National Tsing Hua University, Hsinchu, 30013, Taiwan \\

\authormark{6}Terahertz Optics and Photonics Center, National Tsing Hua University, Hsinchu, 30013, Taiwan \\

\authormark{$\dag$}These authors contributed equally to this work. 
}

\email{
\authormark{*}\href
{mailto:shanghua@ee.nthu.edu.tw}
{shanghua@ee.nthu.edu.tw}
} 


\vspace{-0.3cm}
\begin{abstract}Sub-terahertz (Sub-THz) waves possess exceptional attributes, capable of penetrating non-metallic and non-polarized materials while ensuring bio-safety. 
However, their practicality in imaging is marred by the emergence of troublesome speckle artifacts, primarily due to diffraction effects caused by wavelengths comparable to object dimensions. 
In addressing this limitation, we present the Diffuser-aided sub-THz Imaging System (DaISy), which utilizes a diffuser and a focusing lens to convert coherent waves into incoherent counterparts. 
The cornerstone of our progress lies in a coherence theory-based theoretical framework, pivotal for designing and validating the THz diffuser, and systematically evaluating speckle phenomena. 
Our experimental results utilizing DaISy reveal substantial improvements in imaging quality and nearly diffraction-limited spatial resolution. 
Moreover, we demonstrate a tangible application of DaISy in the scenario of security scanning, highlighting the versatile potential of sub-THz waves in miscellaneous fields. 
\vspace{-0.4cm}
\end{abstract}

\vspace{-0.5cm}
\section{Introduction}

Terahertz (THz) region (100 GHz to 10 THz) of the electromagnetic spectrum has experienced remarkable development and found diverse applications across various fields. 
Applications of THz technology range from high data-rate communication systems~\cite{Koenig_2013_1_Wireless} to molecular spectroscopy~\cite{Kiessling_2013_2_HighPower}, remote sensing~\cite{Liu_2010_3_Broadband}, collision-availability studies~\cite{Shakeel_2021_4_Creating}, and non-destructive imaging~\cite{Kawase_2003_5_Non, Zhang_2023_CLEO_CT_VM_EE}. 
Significantly, THz waves demonstrate exceptional transmission properties through non-metallic and non-polar mediums~\cite{NAFTALY_2005_Terahertz_transmission}, rendering them highly suitable for a wide array of applications. 
Furthermore, the bio-safety profile of THz radiation, attributed to its low quantum photon energy in the meV range, further improves its suitability for practical applications~\cite{Walker_2002_10_safety_guidelines, Berry_2003_6_safety_issues, Clothier_2003_6_Effects}. 
As a result, THz imaging has become prominent in fields like biometrics, industry, and security, including the detection of illicit drugs~\cite{Federici_2005_6_THz} and non-destructive identification of electronic circuits~\cite{Park_2015_7_IC}. 
Nonetheless, the practical viability of THz imaging systems faces limitations due to the low-power nature of their THz sources and the high absorption characteristics in polarized materials. 
These factors hinder their effectiveness in various imaging applications. 
Within the broader THz region, the sub-THz range (100-300 GHz) is particularly advantageous for certain applications, utilizing high-power illumination and relatively lower absorption coefficients in polarized materials to achieve improved depth in imaging~\cite{SUN_2011_A_promising, JUNG_2012_Quantitative}. 
This advantage positions sub-THz imaging as a multifaceted tool, suitable for bio-imaging~\cite{TSENG_2015_Terahertz_Near-field, GENTE_2015_Monitoring}, material analysis~\cite{OYAMA_2009_Sub-terahertz_imaging_of_defects_in_building_blocks}, and security scanning~\cite{TZYDYNZHAPOV_2020_security}. 

Conventional sub-THz imaging typically uses raster scanning, which focuses sub-THz waves onto a minimal focal point and systematically scans the sample pixel by pixel. 
While raster scanning is capable of producing high-quality images, the cost is the prolonged data acquisition process. 
The time-consuming nature of this process is primarily attributed to the use of a two-axis motorized stage in the process. 
An alternative approach uses a galvanometer with an f-theta lens to improve imaging speed, potentially enabling real-time imaging~\cite{Yi_2021_THzImagingSystem, Wang_2022_HighSpeed_THzImaging}. 
This approach involves directing the light path to various positions using a galvanometer and realigning it into a single-pixel detector with an f-theta lens. 
However, using the galvanometer approach entails a trade-off between fields of view (FOV), resolution, and sampling rate. 
Recently, the use of room temperature array detectors for real-time sub-THz imaging, allowing simultaneous multipixel signal acquisition, is gaining popularity. 
One notable array detector is the field-effect-transistor (FET) image sensor used in sub-THz imaging cameras~\cite{Al_2012_FET_Image}. 
The FET sensor operates by exciting plasma waves in the transistor channel, generating a constant voltage across transistor junctions via nonlinear rectification~\cite{Dyakonov_1996_FET_Principle, Knap_2002_FET_Experiment_IIIV_Resonant, Knap_2002_FET_Experiment_IIIV_Nonresonant, Knap_2004_FET_Experiment_Si, Li_2023_FPA_Review}, allowing to detect THz electric field in a broad frequency range. 
However, due to the device parasitics, FET-based cameras usually operate at lower frequencies~\cite{Li_2023_FPA_Review}. 
The uncooled microbolometer array detector is one of the contenders utilizing image sensors for fast-speed sub-THz imaging. 
This sensitive detector operates on the principle of detecting thermal variations from received radiation, altering the conductivity of the thermistor material. 
Therefore, the uncooled microbolometer array detector not only ensures stable image quality but also facilitates real-time imaging. 

However, in sub-THz real-time imaging with the array detector, imaging quality is primarily degraded by speckle artifacts with bending and interference profiles. 
These unexpected artifacts result from the diffraction effect caused by the interaction between coherent sub-THz waves and imaged objects. 
To mitigate the diffraction-induced artifacts, the optical diffuser is proposed, drawing inspiration from speckle-free imaging in the visible~\cite{Ori_2014_Non-invasive} and near-infrared spectra~\cite{Ma_2018_ma2018multimode}. 
Optical diffusers primarily scatter incident electromagnetic waves across broader angles. 
This implementation disrupts wave coherence, leading to the spatial phase independence of the propagated distance, thereby mitigating the diffraction effect. 
Various optical diffusers in the THz frequency range have been explored by several research groups~\cite{Azat_2022_Ghost_imaging, Atsushi_2020_wood-plastic, Graham_2013_air-polymer, Jaax_2013_Optical}. 
For instance, 3D-printed phase diffusers made from acrylonitrile butadiene styrene (ABS) have been investigated as random masks to generate pseudo-random patterns for coherent ghost imaging~\cite{Azat_2022_Ghost_imaging}. 
Utilizing diffuser-aided random patterns to integrate randomized phases allows for image retrieval using iterative or compressed sensing algorithms. 
Other efforts assess the feasibility of using wood-plastic composites to randomize the distribution of materials with various refractive indexes to scatter the waves~\cite{Atsushi_2020_wood-plastic}. 
Another notable work is the use of air-polymer composite materials with micro-structured air bubbles for scattering the wave over $\pm 60$ degrees from the incident direction, presenting itself as a potential candidate for THz diffusers~\cite{Graham_2013_air-polymer}. 
Additionally, binary surface reliefs with sub-wavelength features arranged in a pseudo-random pattern have been fabricated as THz diffusers~\cite{Jaax_2013_Optical}, introducing varied sub-cycle delays across different positions through the diffuser. 
This approach scatters the incident wave into multiple peaks across a range of $\pm 80$ degrees. 
Nevertheless, these studies primarily focus on the scattering capability of the device, lacking a systematic approach for designing and quantifying its decoherence capability. 
Moreover, to our knowledge, these diffusers have not yet been explicitly employed in sub-THz real-time imaging, nor have they undergone comprehensive investigation regarding speckle phenomena and practical applications. 

In this work, we introduce DaISy, a Diffuser-aided sub-THz real-time Imaging System, utilizing a high-power coherent sub-THz source and an uncooled microbolometer array detector. 
DaISy incorporates a diffuser made from cost-effective and easily shaped epoxy resin (AB glue, refractive index 1.79~\cite{naftaly_2007_terahertz}), which breaks coherence and lessens speckle phenomena induced by the coherent wave. 
To evaluate the speckle phenomena, we introduce a theoretical framework that specifically assesses coherence levels, emphasizing their relation to coherence area and speckle contrast. 
The diffuser-aided implementation significantly improves imaging quality and resolution, achieving the resolution of 1.428 LP/cm (line pairs per centimeter, i.e. the unit of line pairs, which indicates the ability to resolve distinct structures within one centimeter). 
Furthermore, we showcase the practical application of DaISy in security scanning, highlighting its potential as an effective tool. 

\section{Methods for restraining speckle artifacts in coherent imaging}
\label{sec:design_theory}

In sub-THz coherent imaging, the quality of imaging is significantly affected by speckle artifacts, primarily due to notable diffraction effects. 
These artifacts manifest as speckle patterns when the inherently coherent sub-THz beam interacts with a diffractive interface. 
Additionally, the mm-level sub-THz wavelength imposes constraints on imaging resolution, limiting the ability to resolve details at a microscopic scale. 
To mitigate these issues, methods to control the coherent beam have been developed, notably through the use of a THz diffuser coupled with a focusing lens. 
This method redistributes the phase and amplitude of the wave, effectively shortening the coherence length and disrupting the wavefront uniformity. 
This, in turn, attenuates the speckle artifacts, leading to improved image quality. 
Additionally, coherence theory provides a quantitative framework for evaluating speckle phenomena. 
Through the analysis of speckle contrast associated with the coherence area, the coherence level can be deduced. 
This analysis of coherence theory also aids in determining the efficacy of the diffuser-lens configuration for speckle evaluation. 

\subsection{Coherent beam manipulation by focusing lens and designed THz diffuser}
\label{subsec:diffuser_design}

In addressing the coherence issues in sub-THz imaging, a focusing lens is adeptly paired with a designed THz diffuser to control the coherent beam. 
The focusing lens modulates the beam phase profile across varying radii, introducing distinct delays that regulate the convergence of incident waves toward a focal point. 
This convergence results in the concentration of beam intensity into a small spot, generating a radial dependence in the phase. 
Subsequently, our efforts have been focused on developing a THz diffuser capable of effectively diminishing wave coherence. 
The diffuser, characterized by surface irregularities, disperses the coherent beam, playing a crucial role in disrupting its coherence. 
This is accomplished by scattering the beam at various angles, introducing a random phase distribution across different angles and thereby reducing its uniformity. 
This mitigation strategy aims to diminish the formation of speckle artifacts. 

To attain an effective and maximal random phase distribution, assessing phase variation (PV) $\Delta \phi$ is crucial, as the performance of the diffuser is largely dominated by the degree of PV. 
The PV measures the level to which the diffuser can alter the phase of the incoming wave, with higher values indicating greater capability for disruption. 
This capability is quantified as follows
\begin{equation}
\label{eqn_PV}
    \Delta \phi
    \, \textcolor{reblack}{=} \,
    \frac{2 \pi (n - n_{0})}{\lambda \, \textcolor{reblack}{N_{V}}}
    \sum_{i = 1}^{\textcolor{reblack}{N_{V}}}
    \left(
    \textcolor{reblack}{H_{v_{i}}} - \bar{H}
    \right)^{2}
\end{equation}
where 
$n - n_{0}$ is the change in the refractive index of the medium (in our case, $n$ for the diffuser material, $n_{0} \approx 1$ for the air), 
$\lambda$ is the wavelength of the incident beam passing through the diffuser, 
\textcolor{reblack}{
$N_{V}$ is the total number of scanning vertices of the diffuser surface, 
$V = \{ v_{i} \}$ defines the set comprising all such vertices for $i = 1, \ldots, N_{V}$, 
and $\bar{H}$ is the mean of all heights across these vertices, with 
$H_{v_{i}}$ indicating the height of each individual vertex.} 

In accordance with~\cref{eqn_PV}, the optimal design of the diffuser is characterized by a high refractive index, ensuring maximum efficacy in speckle reduction, which is pivotal for advancing the quality and resolution of coherent sub-THz imaging. 
Epoxy resin (AB glue), with its high refractive index ($n \approx$ 1.79~\cite{naftaly_2007_terahertz}), cost-effectiveness, malleability, and ease of manufacture, is a prime candidate for prototyping. 
The high refractive index of AB glue plays a crucial role in enabling more randomized PV, a key element in our diffuser design. 
Moreover, the potential of such materials can be improved by altering their intrinsic or spatial properties through additives. 

\pagebreak

The fabrication of AB glue diffusers requires a gradual solidification process, typically taking several hours after mixing. 
This slow solidification tends to evenly smooth the shape of the AB glue, challenging in achieving randomized PV. 
Experimentally, we fabricate five sets of AB-glue diffusers and shape them with varying waiting periods after mixing, ranging from 150 to 255 minutes with a 15-minute waiting interval in between (refer to~\cref{p_Img_1_8} for sample illustrations), to reduce potential errors arising from manual procedures. 
Moreover, a consistent fabrication temperature of 25 $^\circ$C is maintained for all diffusers due to the sensitivity of the exothermic nature in the solidification process with AB glue to the ambient temperature \textcolor{reblack}{(see~\hyperref[supp]{Supplement 1})}. 

For the evaluation of the PV of each diffuser after fabrication, the blue-light 3D Scanner (AutoScan Inspec, SHINNING 3D) is used to capture the height information of the diffuser surface vertices with an accuracy below 10 $\mathrm{\mu}$m. 
Then, the PV of each diffuser is quantified, as shown in~\cref{p20230821_Diffuser_PhaseVar_CohArea_SpeckleCon_1}, indicating the upward trend. 
This upward trend in PV corresponds to increased surface roughness, resulting from higher viscosity due to longer elapsed times before shaping. 
The increased surface roughness breaks the uniformity of coherent waves by scattering them, which in turn introduces PV, randomizes wavefront, and reduces coherent interference. 
\textcolor{reblack}{
Note that a saturation point exists in PV, beyond which further waiting does not lead to considerable improvements. 
This saturation point is reached when the AB glue reaches an equilibrium in its solidification process, leading to minimal changes in surface texture. 
Despite the limitations of our prototyping AB glue diffusers, we next provide the adapted coherence theory related to PV values for speckle evaluation to assess the coherence level of the coherent beam manipulation by the THz diffuser with a focusing lens. 
}

\vspace{-0.3cm}
\begin{figure}[htbp]
    \centering
    \begin{subfigure}[b]{0.48\textwidth}
    \begin{tikzpicture}
    \node
    at (0, 4.5) {\includegraphics[width=.3\textwidth]{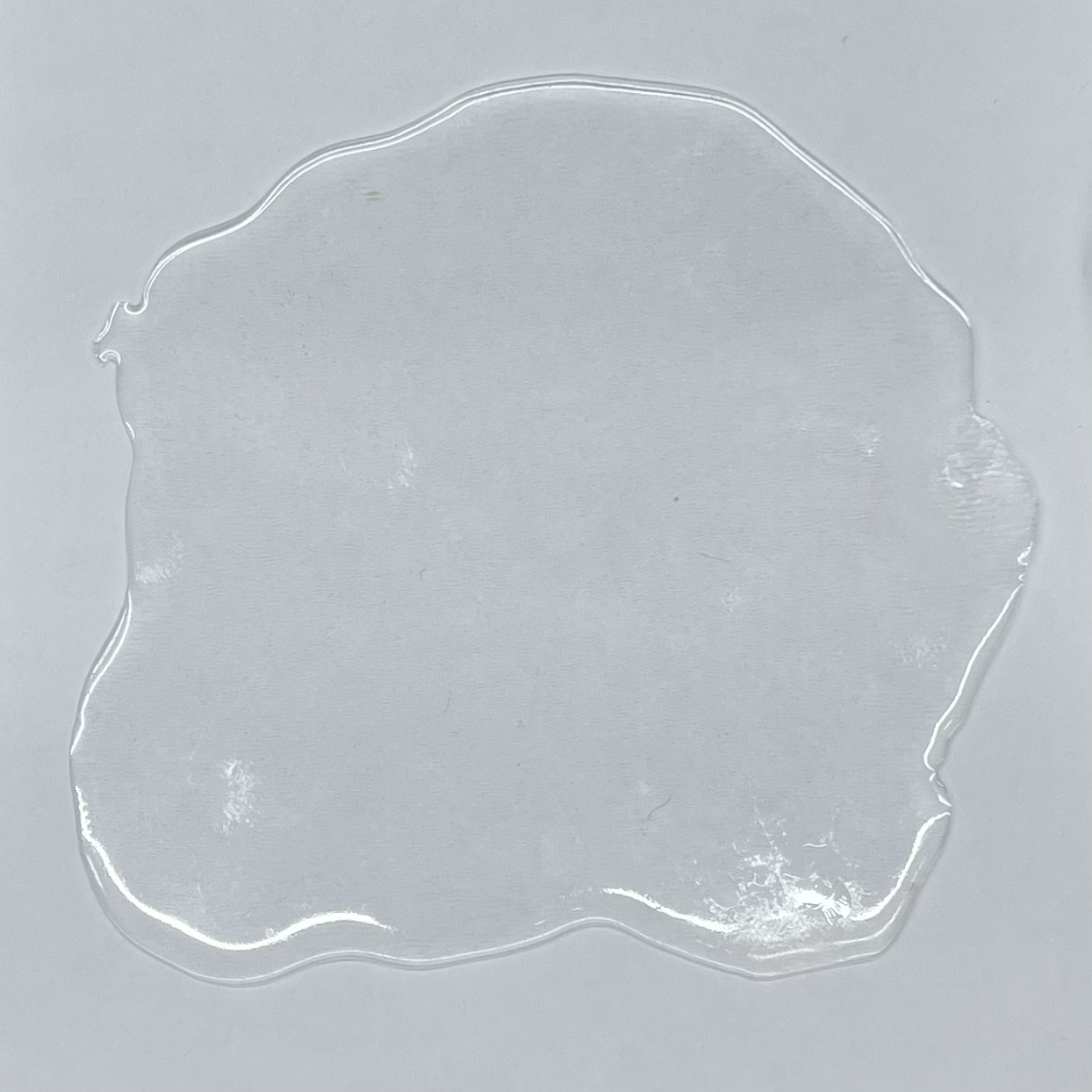}};
    
    \node[black] at (0, 5.3) {{\bf{\scriptsize{150 mins}}}};

    \draw[white,ultra thick] 
    (0.5,3.62) -- (0.9,3.62);
    \node[white] at (0.7,3.74) {\bf{\tiny{1 cm}}};

    \node
    at (2, 4.5) {\includegraphics[width=.3\textwidth]{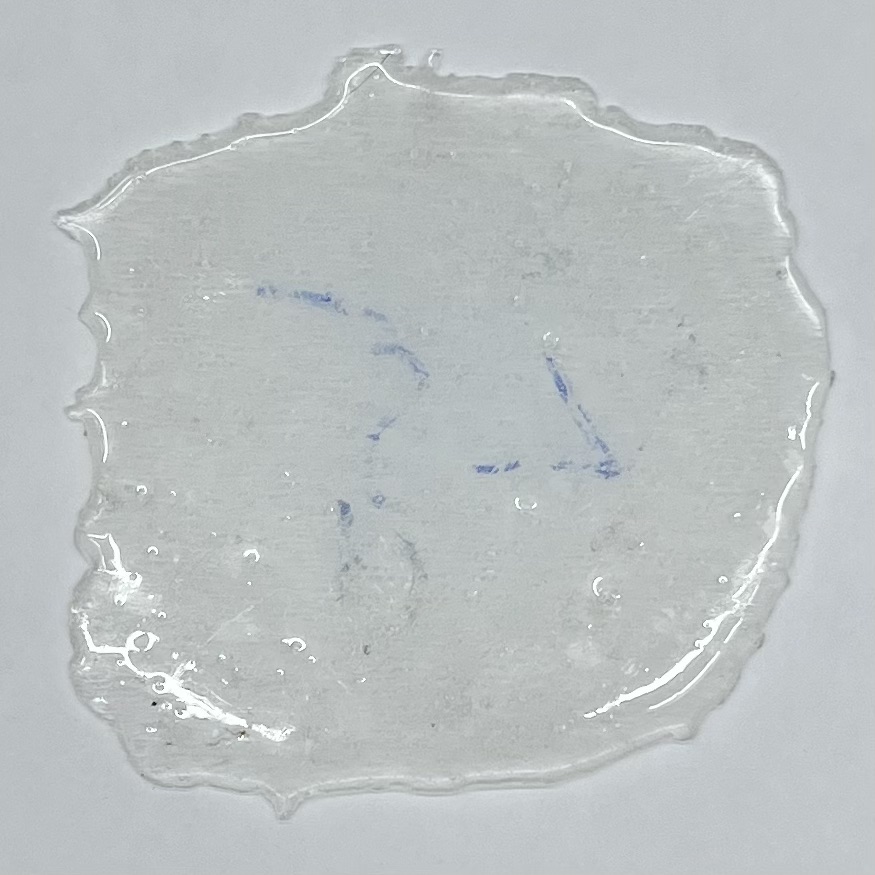}};
    
    \node[black] at (2, 5.3) {{\bf{\scriptsize{165 mins}}}};

    \draw[white,ultra thick] 
    (2.5,3.62) -- (2.9,3.62);
    \node[white] at (2.7,3.74) {\bf{\tiny{1 cm}}};

    \node at (4, 4.5) {\includegraphics[width=.3\textwidth]{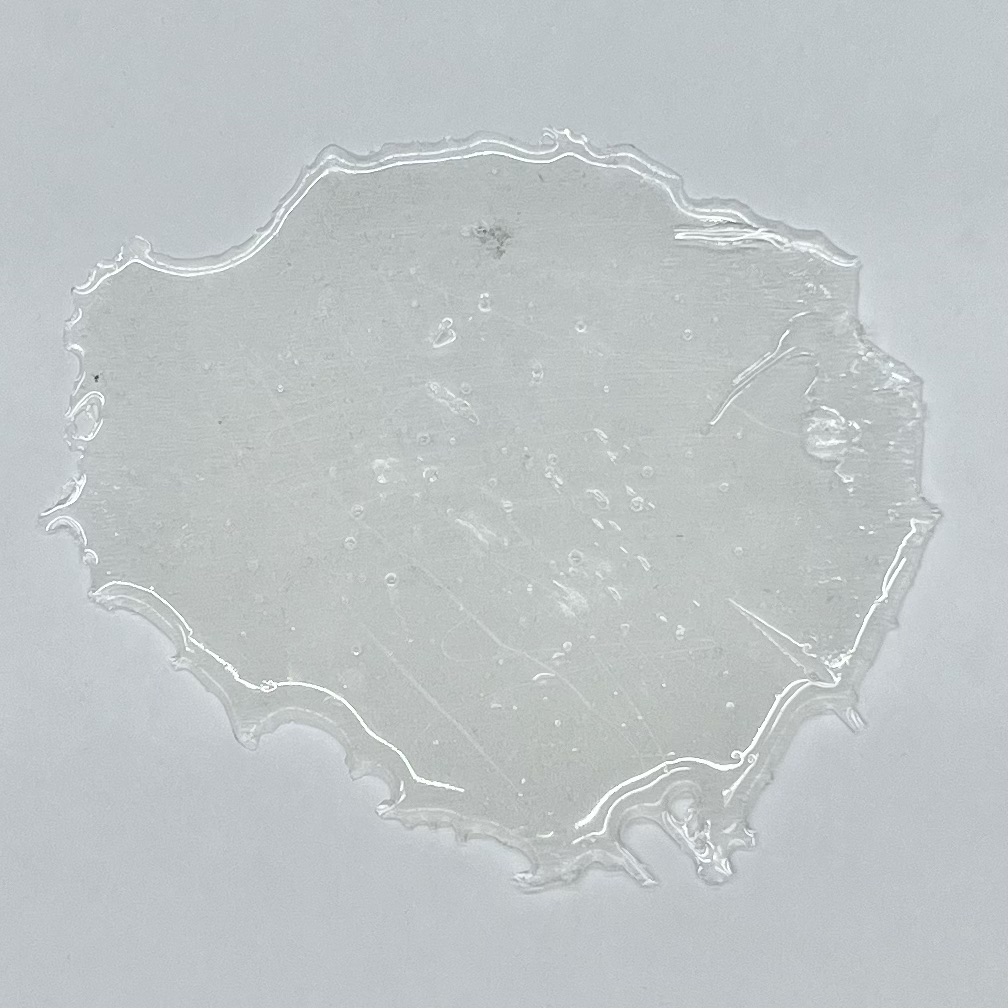}};
    
    \node[black] at (4, 5.3) {{\bf{\scriptsize{180 mins}}}};

    \draw[white,ultra thick] 
    (4.5,3.62) -- (4.9,3.62);
    \node[white] at (4.7,3.74) {\bf{\tiny{1 cm}}};

    \node
    at (0, 2.5) {\includegraphics[width=.3\textwidth]{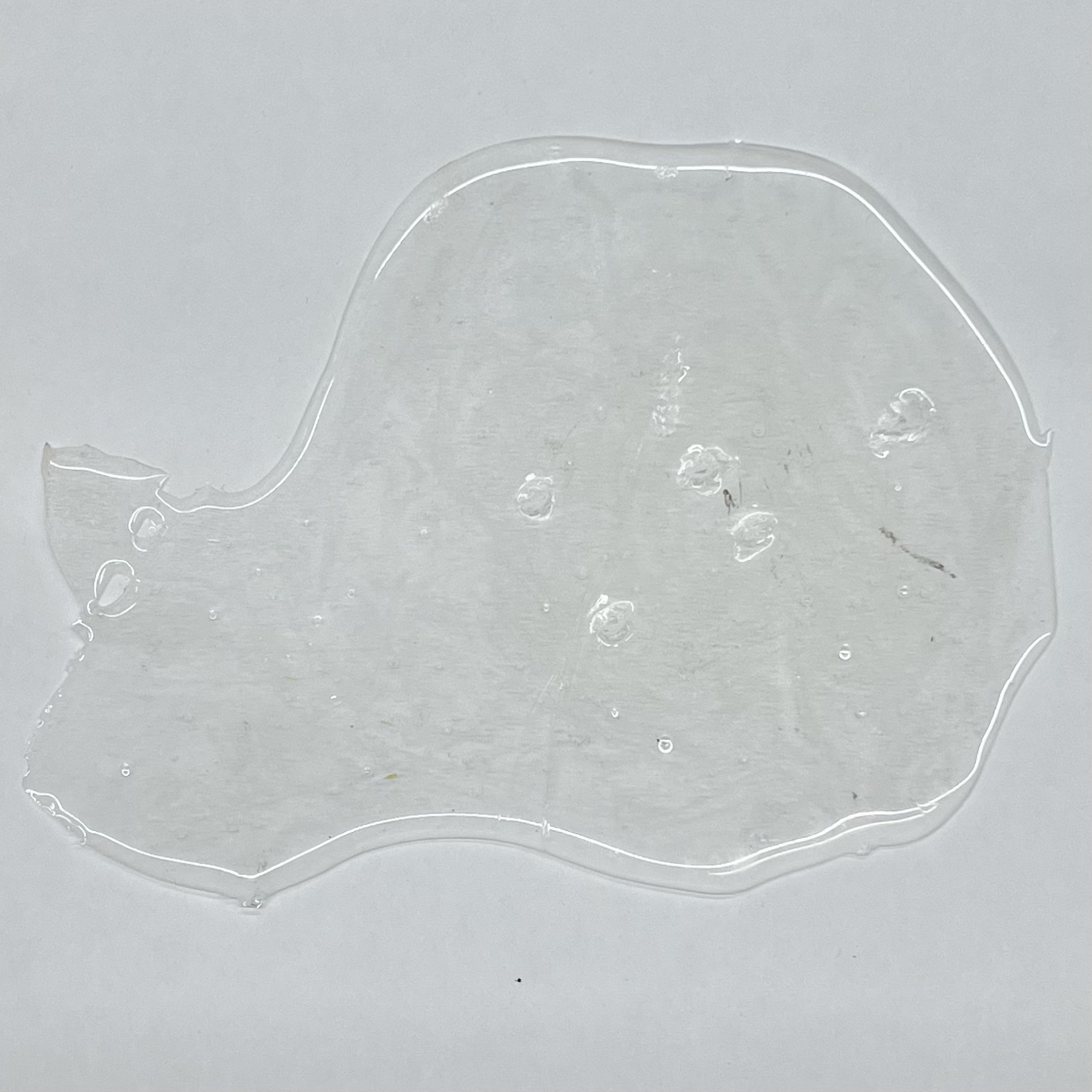}};
    
    \node[black] at (0, 3.3) {{\bf{\scriptsize{195 mins}}}};

    \draw[white,ultra thick] 
    (0.5,1.62) -- (0.9,1.62);
    \node[white] at (0.7,1.74) {\bf{\tiny{1 cm}}};

    \node
    at (2, 2.5) {\includegraphics[width=.3\textwidth]{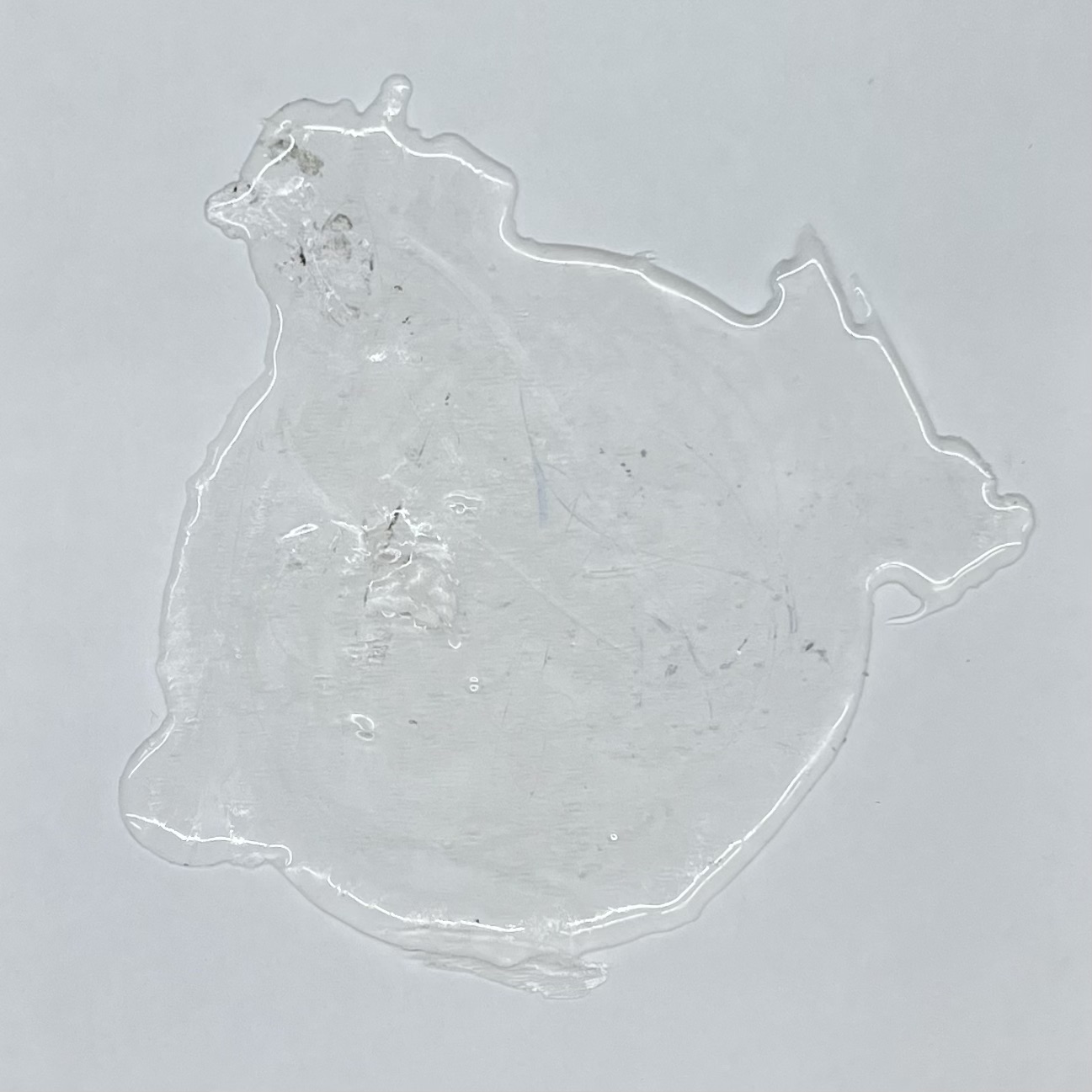}};
    
    \node[black] at (2, 3.3) {{\bf{\scriptsize{210 mins}}}};

    \draw[white,ultra thick] 
    (2.5,1.62) -- (2.9,1.62);
    \node[white] at (2.7,1.74) {\bf{\tiny{1 cm}}};

    \node
    at (4, 2.5) {\includegraphics[width=.3\textwidth]{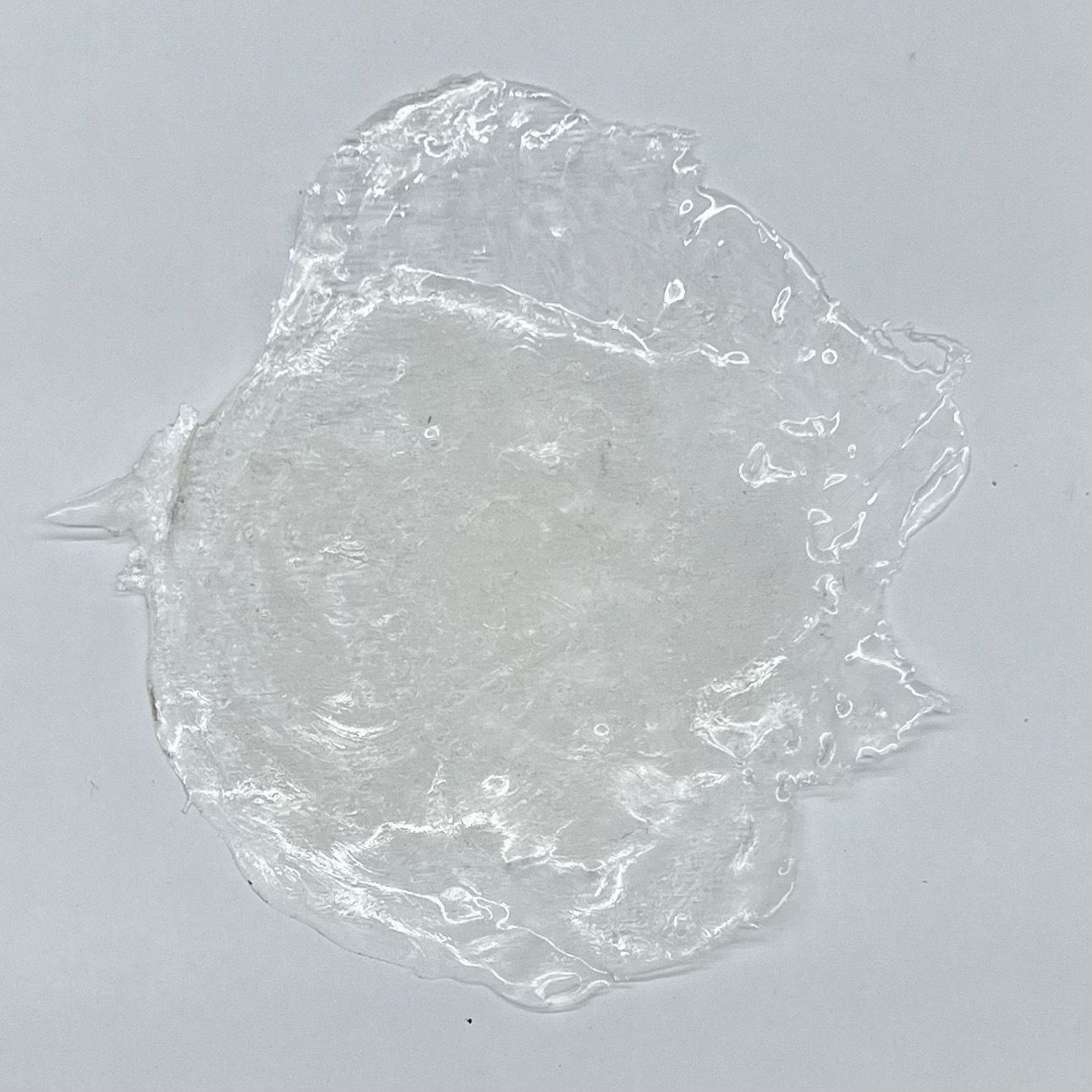}};
    
    \node[black] at (4, 3.3) {{\bf{\scriptsize{225 mins}}}};

    \draw[white,ultra thick] 
    (4.5,1.62) -- (4.9,1.62);
    \node[white] at (4.7,1.74) {\bf{\tiny{1 cm}}};

    \node
    at (1, 0.5) {\includegraphics[width=.3\textwidth]{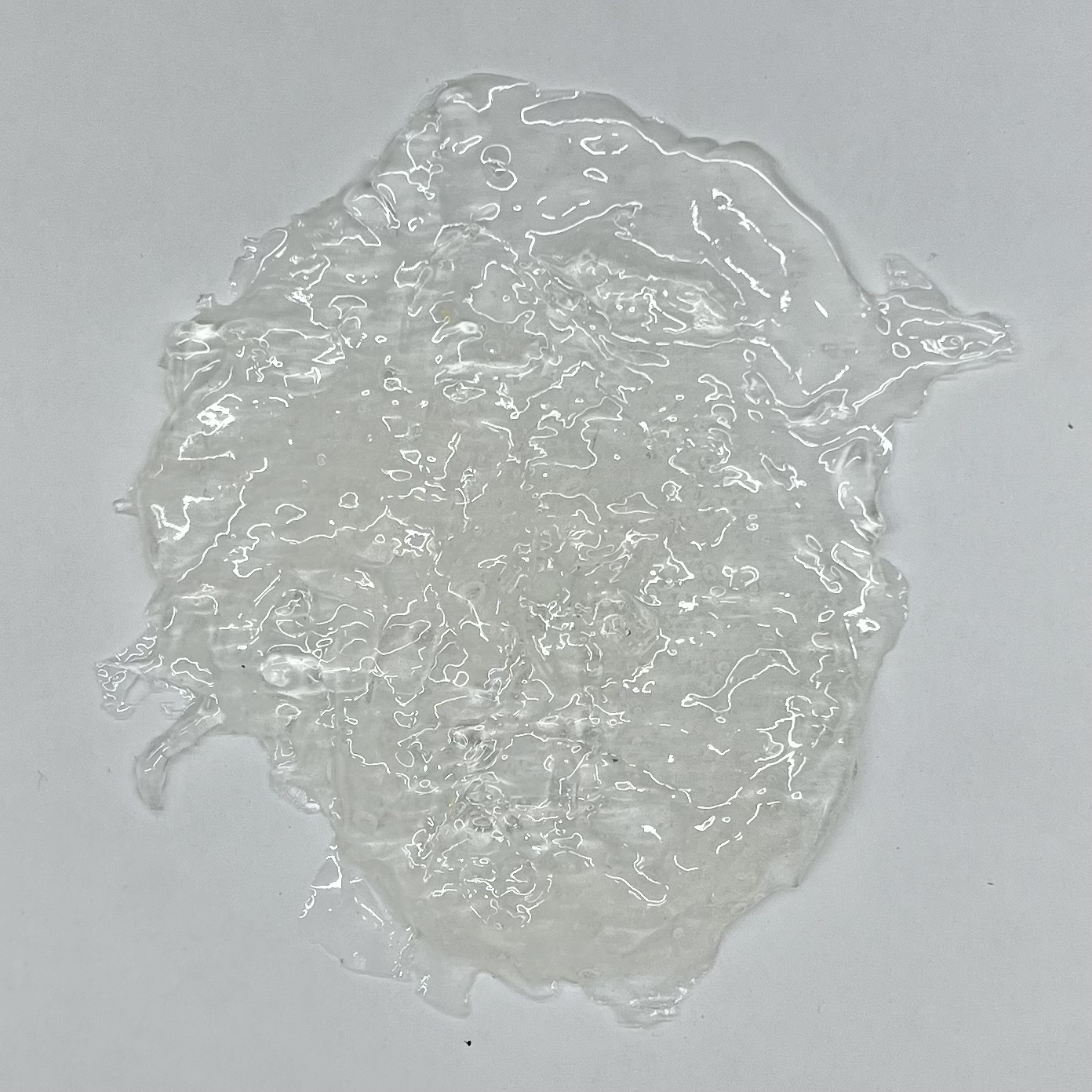}};
    
    \node[black] at (1, 1.3) {{\bf{\scriptsize{240 mins}}}};

    \draw[white,ultra thick] 
    (1.5,-0.38) -- (1.9,-0.38);
    \node[white] at (1.7,-0.26) {\bf{\tiny{1 cm}}};

    \node
    at (3, 0.5) {\includegraphics[width=.3\textwidth]{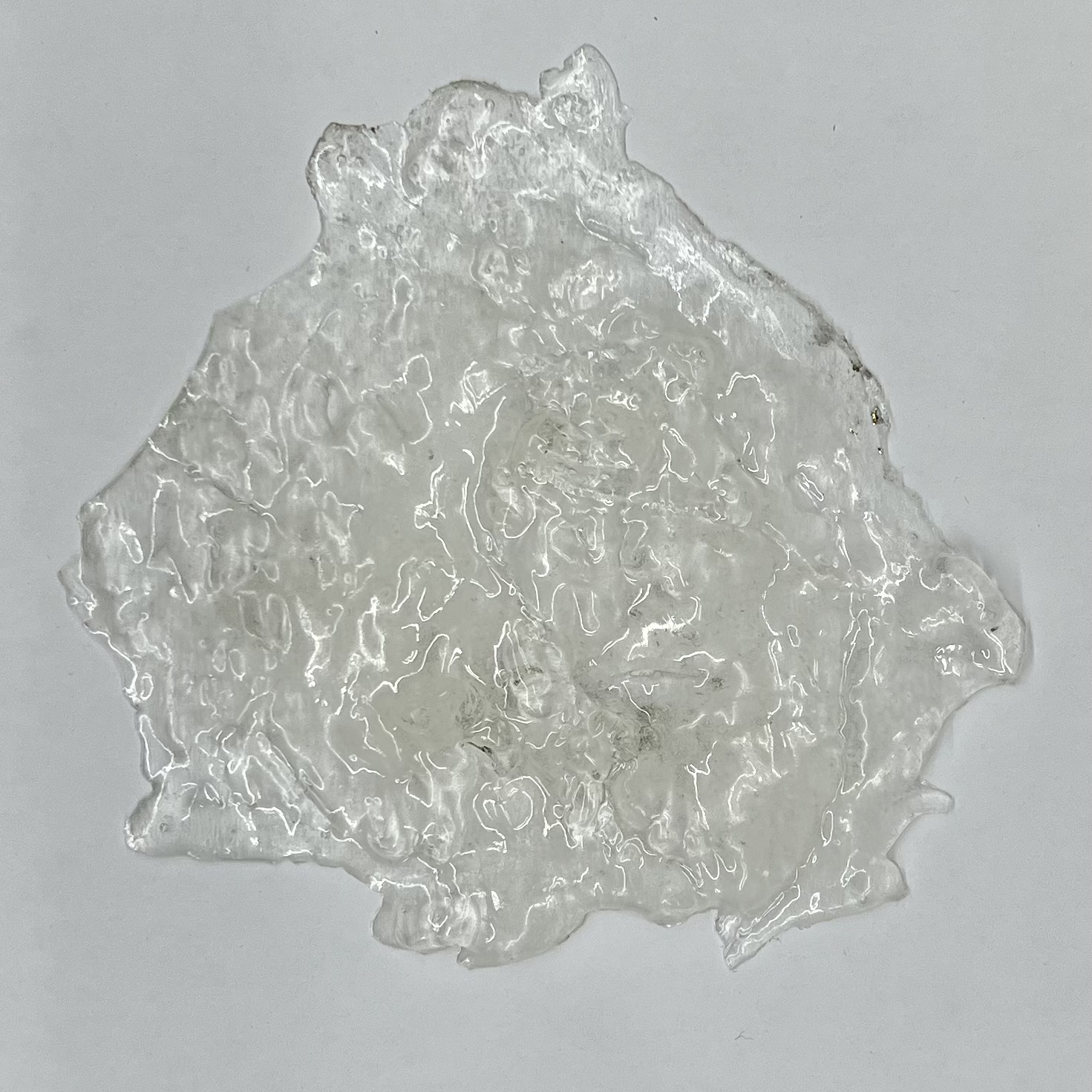}};
    
    \node[black] at (3, 1.3) {{\bf{\scriptsize{255 mins}}}};

    \draw[white,ultra thick] 
    (3.5,-0.38) -- (3.9,-0.38);
    \node[white] at (3.7,-0.26) {\bf{\tiny{1 cm}}};
    
    \end{tikzpicture}
    \vspace{-0.2cm}
    \caption{}
    \label{p_Img_1_8}
    \end{subfigure}
    \begin{subfigure}[b]{0.5\textwidth}
    \includegraphics[width=\textwidth]{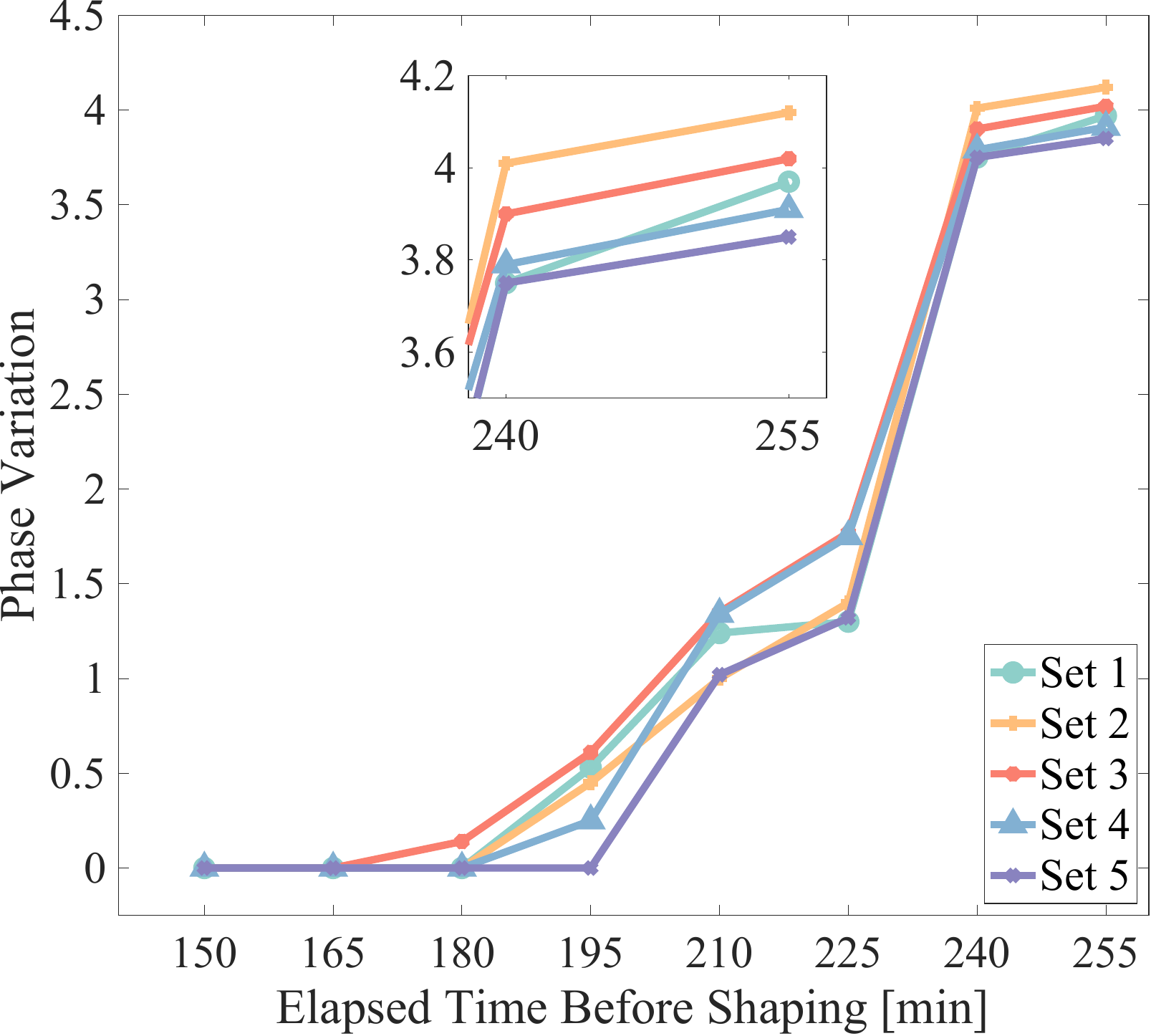}
    \vspace{-0.6cm}
    \caption{}
    \label{p20230821_Diffuser_PhaseVar_CohArea_SpeckleCon_1}
    \end{subfigure}
    \vspace{-0.3cm}
    \caption{
    Illustrations of~\subref{p_Img_1_8} AB-glue diffusers with varying waiting periods after mixing, ranging from 150 to 255 minutes with a 15-minute waiting interval in between;~\subref{p20230821_Diffuser_PhaseVar_CohArea_SpeckleCon_1} phase variations of five sets of diffusers with varying waiting times for shaping after mixing where different fabricating set of diffusers are indicated using specific colours. }
    \label{p20230821_Diffuser_PhaseVar}
\end{figure}
\vspace{-0.3cm}

\subsection{Coherence theory for speckle evaluation}
\label{subsec:theory_incorherent}

Effective mitigation of speckle artifacts is anticipated by manipulating the coherent beam through a focusing lens and a specially designed THz diffuser. 
Such manipulation of the coherent beam leads to alterations in speckle phenomena, which can be quantitatively evaluated using the concept of speckle contrast derived from optical coherence theory~\cite{GOODMAN_2007_Speckle, LI_2013_Coherence}. 
For simplicity, we first introduce coherence theory in the context of a diffuser-only configuration to derive the speckle contrast. 
Then, an adapted coherence theory, incorporating both the diffuser and the focusing lens, is presented. 
This adapted theory allows us to assess the effectiveness of these methods in reducing speckle artifacts, thereby improving image quality in coherent imaging. 

\subsubsection{Analysis with diffuser-only configuration}
\label{subsub:analysis_diffuseronly}

Beginning our analysis with a diffuser-only configuration, we explore the impact of the THz diffuser on speckle phenomena through coherence theory~\cite{GOODMAN_2007_Speckle, LI_2013_Coherence}. 
This analysis specifically evaluates the impact of the diffuser on the coherence of the beam, providing a clear basis for understanding the variations of speckle phenomena. 

Assuming a lossless THz diffuser with a stationary and steady random phase profile, we consider the scenario where the diffuser is in a rotating motion for time-varying phase alterations. 
The THz field interacting with this rotating diffuser can be mathematically expressed as
\begin{equation}
\label{eqn:ElectricField}
\mathbf{a}(\alpha, \beta; t) 
\, \textcolor{reblack}{=} \, 
\mathbf{a}_{0}
e^{j \phi_{0}(\alpha, \beta)}
e^{j \phi_{d}(\alpha - vt, \beta)}
\end{equation}
where $\mathbf{a}_{0}$ is the constant amplitude, $\phi_{0}(\alpha,\beta)$
represents the phase of incident waves at any position $(\alpha, \beta)$ in the given plane with two $\alpha, \beta$ directions, and $\phi_{d}(\alpha-vt,\beta)$ refers to the phase of the THz diffuser with a moving speed $v$ over time $t$. 
To evaluate the coherence of the THz field after passing through a diffuser, mutual coherence $\mathbf{\Gamma}_{12}$ is employed, examining the correlation between two distinct positions within an observation plane, 
\begin{equation}
\label{eqn_Gamma12_MutualCoherence}
\mathbf{\Gamma}_{12}(\alpha_{1}, \beta_{1}, \alpha_{2}, \beta_{2}; \Delta t) 
\, \textcolor{reblack}{=} \, 
|\mathbf{a}_{0}|^{2} 
\langle 
e^{j \phi_{d} (\alpha_{1}-vt,\beta_{1})} 
e^{- j \phi_{d} (\alpha_{2}-vt-v\Delta t,\beta_{2})} 
\rangle
\end{equation}
where $(\alpha_{1}, \beta_{1})$ and $(\alpha_{2}, \beta_{2})$ are two different positions on the same plane, $\Delta t$ represents a small time difference, and $\langle \cdot \rangle$ denotes as an infinite time average. 
Given that speckle phenomena are fundamentally determined by the phase of mutual coherence, the latter time average term of~\cref{eqn_Gamma12_MutualCoherence} crucially describes the coherence level. 
Since the diffuser follows a random phase process $\phi_{d}$, characterized by Gaussian distribution with zero mean and statistically stationary, the coherence level $\gamma_{12}$ is derived from the relationship of above average and the characteristic function (details in Thm.~\ref{thm_avg_cf} of~\Cref{appendix:derivation})
\begin{equation}
\label{eqn:avg_MGF}
\gamma_{12}(\Delta\alpha, \Delta\beta; \Delta t) 
\, \textcolor{reblack}{=} \, 
\langle 
e^{j\phi_{d}(\alpha_{1}-vt, \beta_{1})} 
e^{- j\phi_{d}(\alpha_{2}-vt-v\Delta t, \beta_{2})} 
\rangle 
\xlongequal{\mathrm{Thm.~\ref{thm_avg_cf}}}
e^{
-\sigma_{\phi}^{2} 
[1 - \mu_{\phi}(\Delta\alpha+v\Delta t, \Delta\beta)]
}
\end{equation}
where $\Delta\alpha = \alpha_{1} - \alpha_{2}, \Delta\beta = \beta_{1} - \beta_{2}$, $\sigma_{\phi}$ refers to the standard deviation of the diffuser phase, and $\mu_{\phi}$ represents the normalized autocorrelation function of the phase. 
Furthermore, this autocorrelation function $\mu_{\phi}$ is modeled as a Gaussian shape for subsequent analysis
\begin{equation}
\label{Equation:normalised autocorrelation_function_of_the_phase}
\mu_{\phi}(\Delta\alpha, \Delta\beta) 
\, \textcolor{reblack}{=} \,  
e^{-\frac{\Delta\alpha^{2} + \Delta\beta^{2}}{r_{\phi}^{2}}}
\end{equation}
where $r_{\phi}$ is the correlation radius of the phase. 
Here, the correlation radius of the phase $r_{\phi}$ indicates the degree of autocorrelation in PV caused by the diffuser. 
A higher degree of randomness in the diffuser typically results in a smaller $r_{\phi}$. 
Conversely, greater smoothness in the diffuser surface is associated with a larger $r_{\phi}$. 
Note that the slow solidification process leads to a progressive smoothing of the surface in the diffuser over the solidification period. 
This smoothing effect increases the autocorrelation value, indicating a larger $r_{\phi}$. 

Considering the transient characteristics of the observation plane, the analysis of speckle phenomena is predominantly governed by the spatial coherence level. 
In this scenario, temporal coherence is rendered negligible, i.e. $\Delta t = 0$, thereby focusing the assessment solely on spatial aspects of coherence in relation to speckle phenomena. Following~\cref{eqn:avg_MGF}-\eqref{Equation:normalised autocorrelation_function_of_the_phase}, the spatial coherence level can be reformulated as follows 
\begin{equation}
\label{Equation:gamma12_0}
\gamma_{12}(\Delta \alpha, \Delta \beta; 0) 
= 
e^{
-\sigma_{\phi}^{2}
\left(
1-e^{-\frac
{\Delta \alpha^{2} + \Delta \beta^{2}}
{r_{\phi}^{2}}
}
\right)
}. 
\end{equation}
Then, to determine the spatial coherence level over an infinite area, integration of $\gamma_{12}(\Delta \alpha, \Delta \beta; 0)$ across the variable $(\Delta \alpha, \Delta \beta)$ represents the coherence area. 
However, this direct integration may not converge due to monotonically non-decreasing integrand over an infinite interval. 
To address this issue, a renormalization process is employed, ensuring that the integral maintains unity at its origin. 
This process is achieved by adjusting the term $e^{-\sigma^{2}_{\phi}}$ from $\gamma_{12}(\Delta \alpha, \Delta \beta; 0)$. 
Specifically, the key adjustment is the division of the integral by the factor $(1-e^{-\sigma^{2}_{\phi}})$ to ensure the unity value at the origin. 
Therefore, the coherence area, following these adjustments, can be represented
\begin{equation}
\label{eqn:CoherenceArea}
\begin{aligned}
\mathcal{A}_{c} 
& = 
\iint_{-\infty}^{\infty} 
\left[
\frac
{\gamma_{12}(\Delta \alpha, \Delta \beta; 0) - \gamma_{12}(\infty, \infty; 0)}
{1 - \gamma_{12}(\infty, \infty; 0)}
\right]
\, \mathrm{d} \Delta \alpha \mathrm{d} \Delta \beta 
\\
& = 
\frac
{e^{-\sigma_{\phi}^{2}}}
{1-e^{-\sigma_{\phi}^{2}}}
\iint_{-\infty}^{\infty}
\left(
e^{
\sigma_{\phi}^{2} 
e^{-\frac{\Delta\alpha^{2}+\Delta\beta^{2}}{r_{\phi}^{2}}}
}
- 1
\right)
\, \mathrm{d}\Delta\alpha \mathrm{d}\Delta\beta. 
\end{aligned}
\end{equation}

After determining the coherence area, the speckle contrast can be derived to evaluate the speckle phenomena. 
If the diffuser is uniformly illuminated over a constant area $\mathcal{A}_{D}$, this leads to the generation of multiple coherence areas $\mathcal{A}_{c}$ on its surface. 
Each of these coherence areas $\mathcal{A}_{c}$ contributes an independent speckle intensity pattern to the observation plane. 
Then, the speckle contrast can be approximately expressed as
\begin{equation}
\label{eqn:SpeckleContrast}
C 
\approx
\begin{cases}
1 & {\text{if }}\mathcal{A}_{c} \ge \mathcal{A}_{D}
\\
\left(
\frac{\mathcal{A}_{c}}{\mathcal{A}_{D}}
\right)^{\frac{1}{2}} & {\text{if }}\mathcal{A}_{c} < \mathcal{A}_{D}
\end{cases} 
\end{equation}
where $\mathcal{A}_D$ represents the constant area illuminated on the diffuser. 
With the coherence area of each diffuser in~\cref{p20230821_Diffuser_PhaseVar_CohArea_SpeckleCon_2}, the speckle contrast value can be calculated by~\cref{eqn:SpeckleContrast} for each diffuser shaped with different waiting times as shown in~\cref{p20230821_Diffuser_PhaseVar_CohArea_SpeckleCon_3}. 
As previously mentioned, the diffuser with longer waiting times exhibits a greater PV and a smaller correlation radius $r_{\phi}$ of the phase. 
Following that, these diffusers manifest a reduced coherence level, resulting in a lower speckle contrast value. 
Specifically, upon illuminating 1 m$^2$ area with a coherent beam, the minimal speckle contrast value is 0.4563, as indicated in~\cref{p20230821_Diffuser_PhaseVar_CohArea_SpeckleCon_3}. 
\textcolor{reblack}{
We choose the diffuser associated with this minimal value to break the sub-THz source coherence level externally, with the aim of achieving incoherent illuminated sub-THz imaging. 
Nevertheless, in our implementations using a single THz diffuser, the resulting imaging is still affected by a certain degree of wave diffraction. 
This limitation arises from the saturation point in PV, as discussed in~\Cref{subsec:diffuser_design}.} 

\begin{figure}[H]
    \centering
    \begin{subfigure}[b]{0.49\textwidth}
    \includegraphics[width=\textwidth]{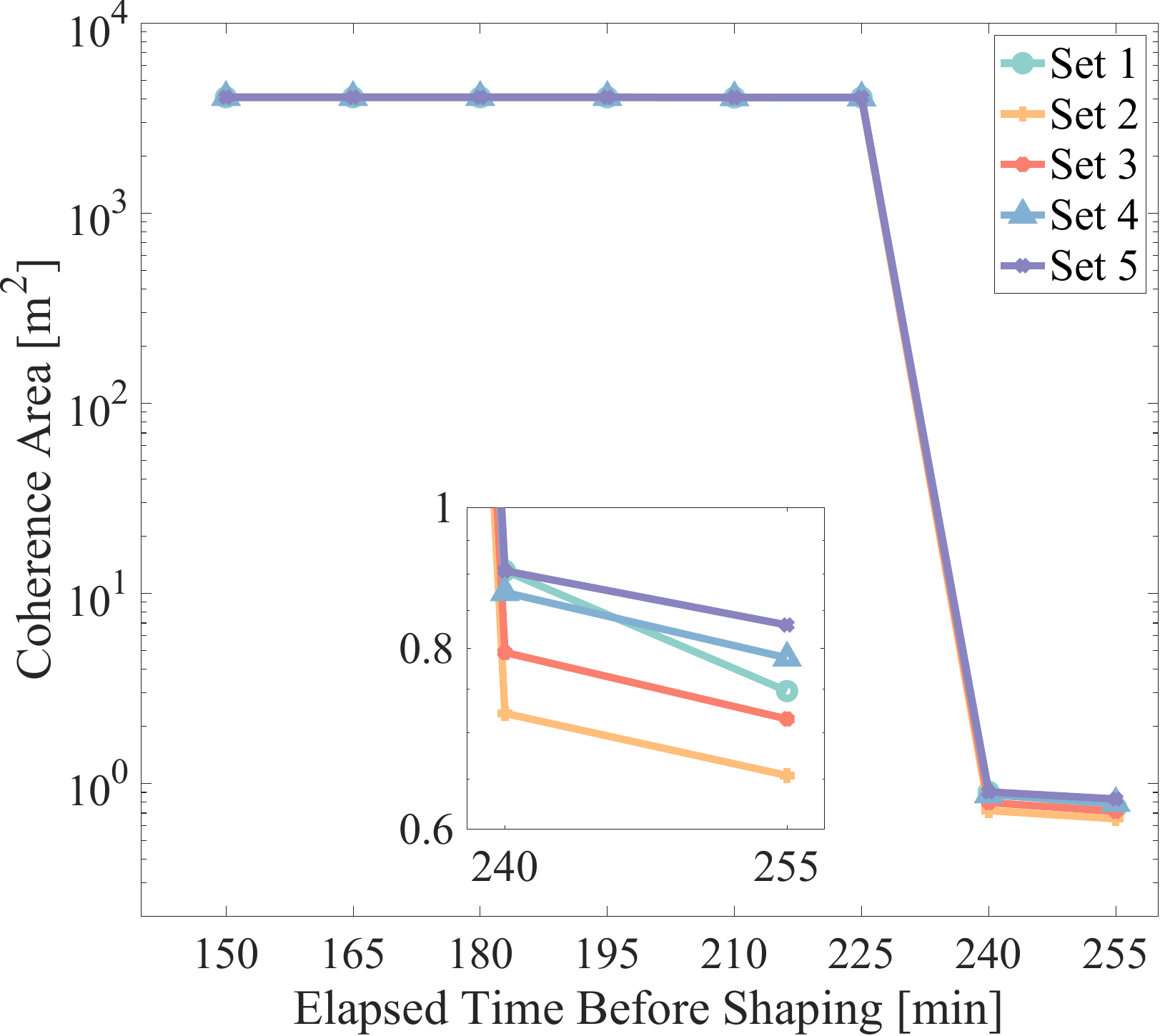}
    \vspace{-0.6cm}
    \caption{}
    \label{p20230821_Diffuser_PhaseVar_CohArea_SpeckleCon_2}
    \end{subfigure}
    \,
    \begin{subfigure}[b]{0.48\textwidth}
    \includegraphics[width=\textwidth]{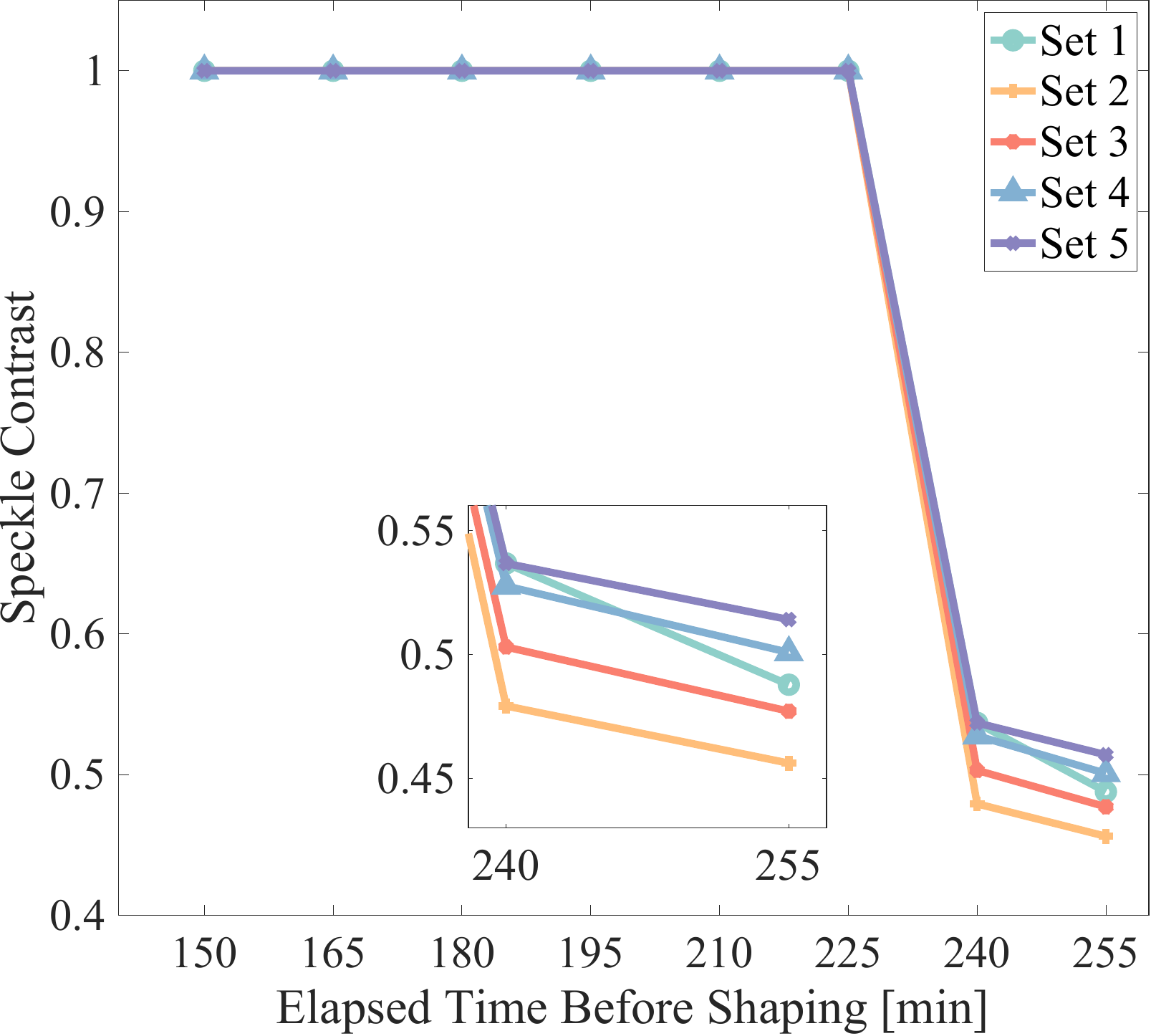}
    \vspace{-0.6cm}
    \caption{}
    \label{p20230821_Diffuser_PhaseVar_CohArea_SpeckleCon_3}
    \end{subfigure}
    \vspace{-0.3cm}
    \caption{
    Illustrations of~\subref{p20230821_Diffuser_PhaseVar_CohArea_SpeckleCon_2} coherence area of five sets of diffusers derived by~\cref{eqn:CoherenceArea};~\subref{p20230821_Diffuser_PhaseVar_CohArea_SpeckleCon_3} speckle contrast values of five sets of diffusers calculated from~\subref{p20230821_Diffuser_PhaseVar_CohArea_SpeckleCon_2} and derived by~\cref{eqn:SpeckleContrast}; where different fabricating set of diffusers are indicated using specific colours. }
    \label{p20230821_Diffuser_SpeckleCon}
\end{figure}

\subsubsection{Adapted analysis with diffuser-lens configuration}
\label{subsub:diffuserlens}

Expanding upon the initial insights from the diffuser-only configuration in~\Cref{subsub:analysis_diffuseronly}, we now explore an adapted analysis that integrates both the THz diffuser and the focusing lens within the coherence theory framework. 
In this comprehensive analysis, the interaction between the diffuser and the lens is examined to understand how their synergistic effect alters spatial coherence. 
Here, the lens not only focuses the beam but also modifies its phase characteristics, which, when combined with the random phase alterations induced by the diffuser, could result in a more pronounced reduction in speckle artifacts. 
The coherence theory is adapted accordingly to account for these manipulations, providing a more robust framework for evaluating the effectiveness of the diffuser-lens combination in speckle mitigation. 

Assuming the initial beam has a uniform phase at the same plane and passes through a thin focusing lens with a focal length $f_{l}$, the diffuser is positioned at the focus spot of this lens. 
By applying the Fresnel diffraction equation~\cite{goodman_2005_introduction}, the mathematical description of the optical field of the beam on the surface of the diffuser is as follows
\begin{equation}
\mathbf{U}(\alpha,\beta) 
= 
\frac{
e^{j k f_{l}}
e^{j\frac{k}{2 f_{l}}(\alpha^2 + \beta^2)}}{j\lambda f_{l}}
\iint_{-\infty}^{\infty}
\left[
\mathbf{U}_0(x,y)
e^{j\frac{k}{2 f_{l}}(x^2+y^2)}
e^{-j\frac{2\pi}{\lambda f_{l}}(\alpha x+\beta y)}
\right]
\, \mathrm{d}x\mathrm{d}y
\end{equation}
where $k$ represents the wavelength number, $f_{l}$ refers to the focal length of the thin focusing lens, $(x, y)$ represents a point on the initial plane of the light source, $\mathbf{U}_0(x, y)$ signifies the optical field at starting plane, and $(\alpha, \beta)$ represents the point on the plane situated at focal spot position $F$. 
Since $\mathbf{U}(\alpha, \beta)$ is the optical field before the incident to the diffuser,~\cref{eqn:ElectricField} can be rewritten as 
\begin{equation}
\mathbf{a}(\alpha, \beta; t) 
= 
\mathbf{U}(\alpha, \beta)
e^{j \phi_d (\alpha-vt,\beta)}. 
\end{equation}
Then, the coherence level in~\cref{eqn:avg_MGF} can be reformulated as
\begin{eqnarray}
\begin{aligned}
& \gamma_{12}(\Delta \alpha, \Delta \beta, \alpha_{2}, \beta_{2}; \Delta t) \\
= 
& \, 
e^{-\sigma_{\phi}^2[1 - \mu_{\phi}(\Delta \alpha + v\Delta t, \Delta \beta)]}
e^{j\frac{k}{2 f_{l}}[\Delta \alpha(\Delta \alpha + 2\alpha_{2}) + \Delta \beta(\Delta \beta + 2\beta_{2})]}
\iint_{-\infty}^{\infty}
e^{-j\frac{2\pi}{\lambda f_{l}}(x \Delta \alpha + y \Delta \beta)}
\, \mathrm{d}x\mathrm{d}y
\end{aligned}
\end{eqnarray}
where $\Delta\alpha=\alpha_1-\alpha_2$, $\Delta\beta=\beta_1-\beta_2$. 
Considering the spatial coherence level with $\Delta t = 0$, and $(\alpha_2,\beta_2)$ is situated at the center of the beam, the adapted coherence area in~\cref{eqn:CoherenceArea} can be derived as
\begin{equation}
\label{Eq:ac_with_phi_F}
\mathcal{\tilde{A}}_{c} 
= 
\iint_{-\infty}^{\infty}
\frac{e^{-\sigma_{\phi}^{2} + \phi_F(\Delta \alpha, \Delta \beta)}}{1-e^{-\sigma_{\phi}^{2} + \phi_F(\Delta\alpha, \Delta\beta)}}
\left\{
e^{
\left[
\sigma_{\phi}^{2} e^{-\frac{\Delta \alpha^{2} + \Delta \beta^{2}}{r_{\phi}^{2}}} 
+ 
\phi_{F}(\Delta \alpha, \Delta \beta)
\right]
} - 1
\right\}
\, \mathrm{d}\Delta \alpha \mathrm{d}\Delta \beta
\end{equation}
where $\phi_{F}(\Delta\alpha,\Delta\beta)$ refers to the contribution of coherence area from the focusing lens
\begin{equation}
\label{eqn:phi_F}
\phi_{F}(\Delta\alpha, \Delta\beta)
=
\iint_{-\infty}^{\infty}
\left[
e^{
-j\frac{k}{2 f_{l}}(\Delta \alpha^{2} + \Delta \beta^{2}) 
+ 
j\frac{2 \pi}{\lambda f_{l}}(x\Delta \alpha + y\Delta \beta)
}
\right]
\, \mathrm{d}x \mathrm{d}y. 
\end{equation}

After wave propagation through the diffuser-lens configuration, the adapted coherence area can be adjusted in various configurations associated with~\cref{Eq:ac_with_phi_F}. 
This demonstrates the potential for declining the coherence area when using a fixed phase-variation diffuser. 
Moreover, since the index of $\phi_{F}(\Delta\alpha,\Delta\beta)$ is divided by the focal length $f_{l}$, a larger value of $f_{l}$ has a more subtle effect on the coherence area, while a shorter focal length amplifies this effect. 
For example, when the maximum PV is 4.12 using a THz focusing lens ($f_{l} = 3$ cm), the speckle contrast can be reduced from 0.4563 to 0.0211. 
This diffuser-lens configuration suggests a solution to the challenge of shaping AB-glue diffusers to achieve imaging without significant distortion and a subsequent reduction in speckle artifacts. 

\pagebreak

\section{Diffuser-aided Sub-THz Imaging System (DaISy) with results and discussions}

The Diffuser-aided sub-THz Imaging System (DaISy) is configured \textcolor{reblack}{with specified distances between each component} as depicted in~\cref{p_SubTHz_Setup}. 
From right to left in the schematic diagram, the coherent sub-THz source consists of a microwave signal generator (SMB100A, Rohde \& Schwarz) and a frequency multiplier (WR-5.1, Virginia Diodes) with a conical horn antenna (WR-5.1, Virginia Diodes) optimized for wave generation and propagation. 
This assembly generates directional coherent radiation at 200.4 GHz by configuring the microwave signal generator to 16.7 GHz and amplifying it twelvefold through the frequency multiplier. 
A 2-inch plano-convex THz lens (focal length: $f_{l}$ = 3 cm) is introduced between the THz diffuser and the horn antenna, where the THz diffuser is placed at the focal point of this THz lens. 
Here, this THz lens serves to initially redistribute the phase of coherent waves, thereby amplifying the ability of the diffuser to break the coherence of incoming waves. 
\textcolor{reblack}{
All lenses employed in our DaISy, made of polypropylene (PP), are self-manufactured using computer numerical control (CNC) machinery.}
The THz diffuser is chosen with the highest PV value (PV = 4.12) to suppress the coherence level of the sub-THz radiation. 
\textcolor{reblack}{
This suppression correlates with a decrease in speckle artifacts, thereby improving the clarity and quality of the captured images.}
\textcolor{reblack}{
The THz diffuser is mounted on a rotation motor ($\mathcal{R}$) that operates at a speed of 1000 revolutions per minute (rpm).}
This rotating diffuser allows the incoming coherent wave to pass through varying thicknesses of the diffuser in a short time, effectively converting it into incoherent waves. 

As the incoherent sub-THz beam propagates through the THz diffuser, a 2-inch plano-convex THz lens ($f_{l}$ = 10 cm) is positioned to resize and efficiently collect the sub-THz beam, including the scattering components. 
Subsequently, the propagated sub-THz beam covers the whole area of the object being imaged. 
Following the object, a 4-inch plano-convex THz lens ($f_{l}$ = 17 cm) is implemented to focus the sub-THz beam onto the focal plane array of the THz camera for imaging. 
Then, the image is captured using an uncooled microbolometre array-detector THz camera (MICROXCAM-384i-THz, INO) with a THz objective lens (F/0.7, INO; $f_{l} = 4.4$ cm). 
As this camera is highly sensitive to thermal variations, we attach a 0.1-mm-thick paper on top of the THz objective lens as the filter to attenuate environmental heat/light perturbations. 
All imaging results obtained from the THz camera take the normalization process from their raw files by the programming language  MATLAB\_R2023b\textsuperscript{\tiny\textregistered} performed on macOS Sonoma 14.2 with an Apple M1 Max Chip and 64 GB Memory. 

\vspace{-0.3cm}
\begin{figure}[H]
    \centering
    \begin{subfigure}[b]{0.99\textwidth}
    \adjincludegraphics[width=\textwidth, trim={{0.\width} {0.2\height} {0.\width} {0.1\height}}, clip]{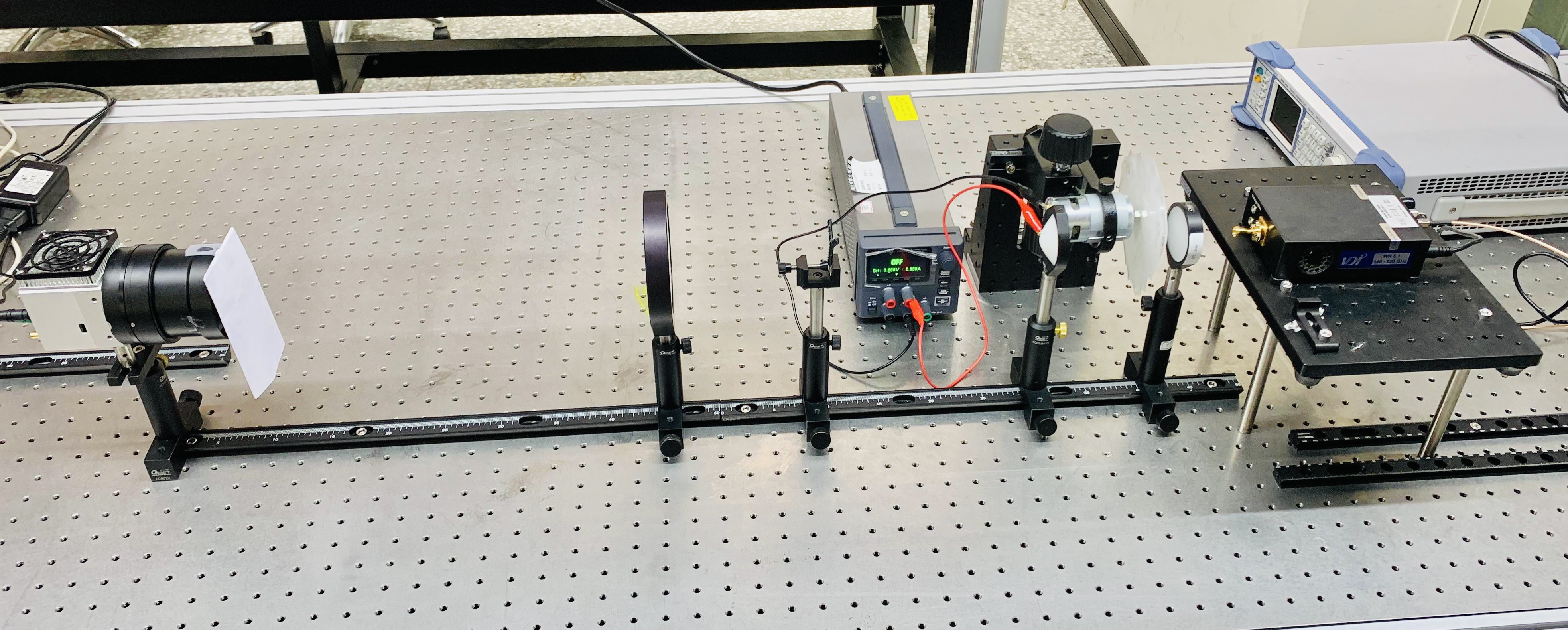}
    \label{p_Img_SubTHz_Setup}
    \end{subfigure}
    \\
    \vspace{-0.3cm}
    \begin{subfigure}[b]{\textwidth}
    \begin{adjustbox}{width=1.01\textwidth}
    \hspace{-0.6cm} 

    \definecolor{nottblue}{RGB}{123, 192, 249} 
    \definecolor{nthupurple}{RGB}{127,15,133}
    \definecolor{lightyellow}{RGB}{249, 248, 113} 
    \definecolor{lightorange}{RGB}{255, 190, 122} 
    \definecolor{goldyellow}{RGB}{255, 198, 55} 
    \definecolor{saffronyellow}{RGB}{255, 189, 25} 
    \definecolor{ResultColour}{RGB}{191, 103, 128} 
    
    \begin{tikzpicture}[scale=1]
    \tikzstyle{every node}=[scale=1]
    
    \fill[ResultColour!60] (-8.3,0.9) -- (-8.3,0.3) -- (-5.9815,0.0446) -- (-5.9722,1.0361);
    \fill[ResultColour!60] (-3.8631,0.8184) -- (-3.8661,0.3268) -- (-5.8815,0.0446) -- (-5.8722,1.0361) ;
    \fill[ResultColour!60] (-3.2,0.6) -- (-3.8292,0.325) -- (-3.83,0.84) ;
    \fill[lightorange!60] (-3.2,0.6) -- (-2.9,0.5) -- (-2.9,0.7) ;
    \fill[lightorange!60] (-1.5,0.6) -- (-2.9,0.5) -- (-2.9,0.7) ;
    
    \fill[black!60] (-9.2,1) -- (-9.2,0.2) -- (-8.2,0.2) -- (-8.2,1) plot[smooth, tension=.7] coordinates {(-8.2,1) (-8.2,0.6) (-8.2,0.2)};
    \draw plot[smooth, tension=.7, double] coordinates {(-9.2,1) (-9.2,0.2)};
    \draw plot[smooth, tension=.7] coordinates {(-9.2,1) (-8.2,1)};
    \draw plot[smooth, tension=.7] coordinates {(-9.2,0.2) (-8.2,0.2)};
    \draw plot[smooth, tension=.7] coordinates {(-8.2,1) (-8.2,0.6) (-8.2,0.2)};
    \node at (-9.1,-0.4) {\scriptsize{with Objective Lens}};
    
    \draw[smooth, tension=.7, fill, gray!50]  (-10.2,1.1) rectangle (-9.2,0.1);
    \node at (-9.7,-0.1) {\scriptsize{THz Camera}};
    
    \draw[smooth, tension=.7, fill, gray!30]  (-8.18,1.15) rectangle (-8.15,0.05);
    \node at (-8.8,-0.7) {\scriptsize{and Filter (Paper)}};
    
    \fill[gray!30] plot[smooth, tension=.7] coordinates {(-6,1.4) (-5.7,0.6) (-6,-0.2) (-6,1.4)};
    \draw[white] plot[smooth, tension=.7] coordinates {(-6,1.4) (-5.7,0.6) (-6,-0.2) (-6,1.4)} ;
    \node at (-5.9,-0.4) {\scriptsize{Lens}};
    
    \fill[black!80] plot  (-5.2,0.4) rectangle (-4.9,0.3);
    
    \fill[gray!70] plot[smooth, tension=.7, thick] (-5,0.4) -- (-5,0.8) -- (-5.1,0.8) -- (-5.1,0.4) -- (-5,0.4) ;
    \draw plot[smooth, tension=.7, thick] (-5,0.4) -- (-5,0.8) -- (-5.1,0.8) -- (-5.1,0.4) -- (-5,0.4) ;
    \node at (-5.05,1.1) {\scriptsize{Object}};
    
    \fill[gray!30] plot[smooth, tension=.7] coordinates {(-3.8,0.9) (-4,0.6) (-3.8,0.3)  (-3.8,0.9)};
    \draw[white] plot[smooth, tension=.7] coordinates {(-3.8,0.9) (-4,0.6) (-3.8,0.3) (-3.8,0.9)};
    \node at (-3.8,0.1) {\scriptsize{Lens}};
    
    \fill[gray!30] plot[smooth, tension=.7] coordinates {(-2.9,0.9) (-2.6,0.6) (-2.86,0.3) (-2.9,0.4) (-2.9,0.5) (-2.9,0.6) (-2.9,0.9)} ;
    \draw[white] plot[smooth, tension=.7] coordinates {(-2.9,0.9) (-2.6,0.6) (-2.86,0.3) (-2.9,0.4) (-2.9,0.5) (-2.9,0.6) (-2.9,0.9)} ;
    \node at (-2.8,0.1) {\scriptsize{Lens}};
    
    \draw[very thick, black!40]  plot[smooth, tension=.8] 
    		coordinates {(-3.184,0.2677) (-3.2606,0.0768) (-3.2314,-0.0229)};
    		
    \draw[smooth, tension=.7, fill, nthupurple!60]  (-3.2,1) rectangle (-3.15,0.2);
    \node at (-3.175,1.15) {\scriptsize{Diffuser}};
    
    \draw (-3.175,-0.1866) ellipse (0.12 and 0.18);
    \node at (-3.175,-0.1866) {\tiny{$\mathcal{R}$}};
    \node at (-2.72,-0.3481) {\scriptsize{Motor}};
    
    \fill[goldyellow] (-2.2,0.65) -- (-2.2,0.55) -- (-2.35,0.5) -- (-2.35,0.7) plot[smooth, tension=.7] 
    			coordinates {(-2.35,0.7) (-2.4,0.6) (-2.35,0.5)};
    \draw[saffronyellow] plot[smooth, tension=.7] (-2.35,0.7) -- (-2.2,0.65) -- (-2.2,0.55) -- (-2.35,0.5) ;
    \draw[saffronyellow] plot[smooth, tension=.7] coordinates {(-2.35,0.7) (-2.4,0.6) (-2.35,0.5)};
    
    \draw[very thick, saffronyellow!60]  plot[smooth, tension=.8] coordinates {(-1.1,0.8) (-1.3,0.7) (-1.5,0.5)};
    
    \fill[black!80] plot  (-1.4,0.4) rectangle (-2.2,0.8);
    \node at (-1.1,0.2) {\scriptsize{Frequency Multiplier}};
    \node at (-1.35,-0.1) {\scriptsize{with Horn Antenna}};
    
    \draw[fill, nottblue!60]  (0,1.4) rectangle (-1.2,0.6);
    \node at (-0.6,1.1) {\scriptsize{Microwave}};
    \node at (-0.6,0.9) {\scriptsize{Source}};
    
    \draw[color=ResultColour, fill=ResultColour!60]  (-4.8,-0.6) rectangle (-4.6,-0.8);
    \node at (-3.9,-0.7) {\scriptsize{Incoherence}};
    
    \draw[color=lightorange, fill=lightorange!60]  (-2.9,-0.6) rectangle (-2.7,-0.8);
    \node at (-2.1,-0.7) {\scriptsize{Coherence}};
    
    \draw (-8.2,1.5)--(-8.2,1.55);
    \draw (-8.2,1.55)--(-5.9,1.55);
    \draw (-5.9,1.5)--(-5.9,1.55);
    \node at (-7.1,1.42) {\tiny{35 cm}};
    
    \draw (-5.9,1.55)--(-5.05,1.55);
    \draw (-5.05,1.5)--(-5.05,1.55);
    \node at (-5.475,1.42) {\tiny{12 cm}};
    
    \draw (-5.05,1.55)--(-3.9,1.55);
    \draw (-3.9,1.5)--(-3.9,1.55);
    \node at (-4.475,1.42) {\tiny{20 cm}};
    
    \draw (-3.175,1.55)--(-3.9,1.55);
    \draw (-3.175,1.5)--(-3.175,1.55);
    \node at (-3.5375,1.42) {\tiny{7 cm}};
    
    \draw (-3.175,1.55)--(-2.75,1.55);
    \draw (-2.75,1.5)--(-2.75,1.55);
    \node at (-2.9625,1.42) {\tiny{3 cm}};
    
    \end{tikzpicture}
    \label{p_Sketch_SubTHz_Setup}
    \end{adjustbox}
    \end{subfigure}
    \vspace{-0.5cm}
    \caption{
    Schematic diagram of the Diffuser-aided sub-THz Imaging System (DaISy). }
    \label{p_SubTHz_Setup}
\end{figure}

\pagebreak

As elaborated in~\Cref{sec:design_theory}, solely using a diffuser does not suffice to meet the desired DaISy performance criteria. 
To demonstrate this, experiments are conducted under three different configurations: without both a diffuser and a focusing lens, using only a diffuser, and combining both a diffuser and a focusing lens. 
Initially, we explore the formation of speckle patterns with Gaussian sub-THz beam illumination, as depicted in~\cref{p20230823_Diffuser_THzCamera_Beam}. 
In the first scenario (\cref{p20230823_Diffuser_THzCamera_Beam_1}), where neither the diffuser nor the focusing lens is used, we observe a speckle contrast of 0.409, indicating a high level of speckle artifacts due to the nature of coherence imaging. 
In the second configuration, where only the diffuser is employed (\cref{p20230823_Diffuser_THzCamera_Beam_2}), there is a noticeable decrease in speckle contrast to 0.255. 
This decrease demonstrates the effectiveness of the diffuser in altering the coherence of the beam, thereby influencing the speckle patterns. 
A further improvement is seen in the third configuration (\cref{p20230823_Diffuser_THzCamera_Beam_3}) with the addition of the focusing lens to the diffuser. 
Here, the speckle contrast is further reduced to 0.155, showcasing the improved capability of this combined configuration in reducing speckle artifacts. 
The decreasing speckle contrast values across our configurations demonstrate the effectiveness of both the diffuser and lens in breaking beam coherence, directly improving image clarity. 
This consistent trend highlights the significant role each component plays in achieving higher image quality and moving towards speckle-free imaging. 

\begin{figure}[htbp]
\centering
    \begin{subfigure}[b]{0.3\textwidth}
    \begin{tikzpicture}
    \node
    {\adjincludegraphics[width=\textwidth, trim={{0.\width} {0.\height} {.18\width} {0.\height}}, clip]{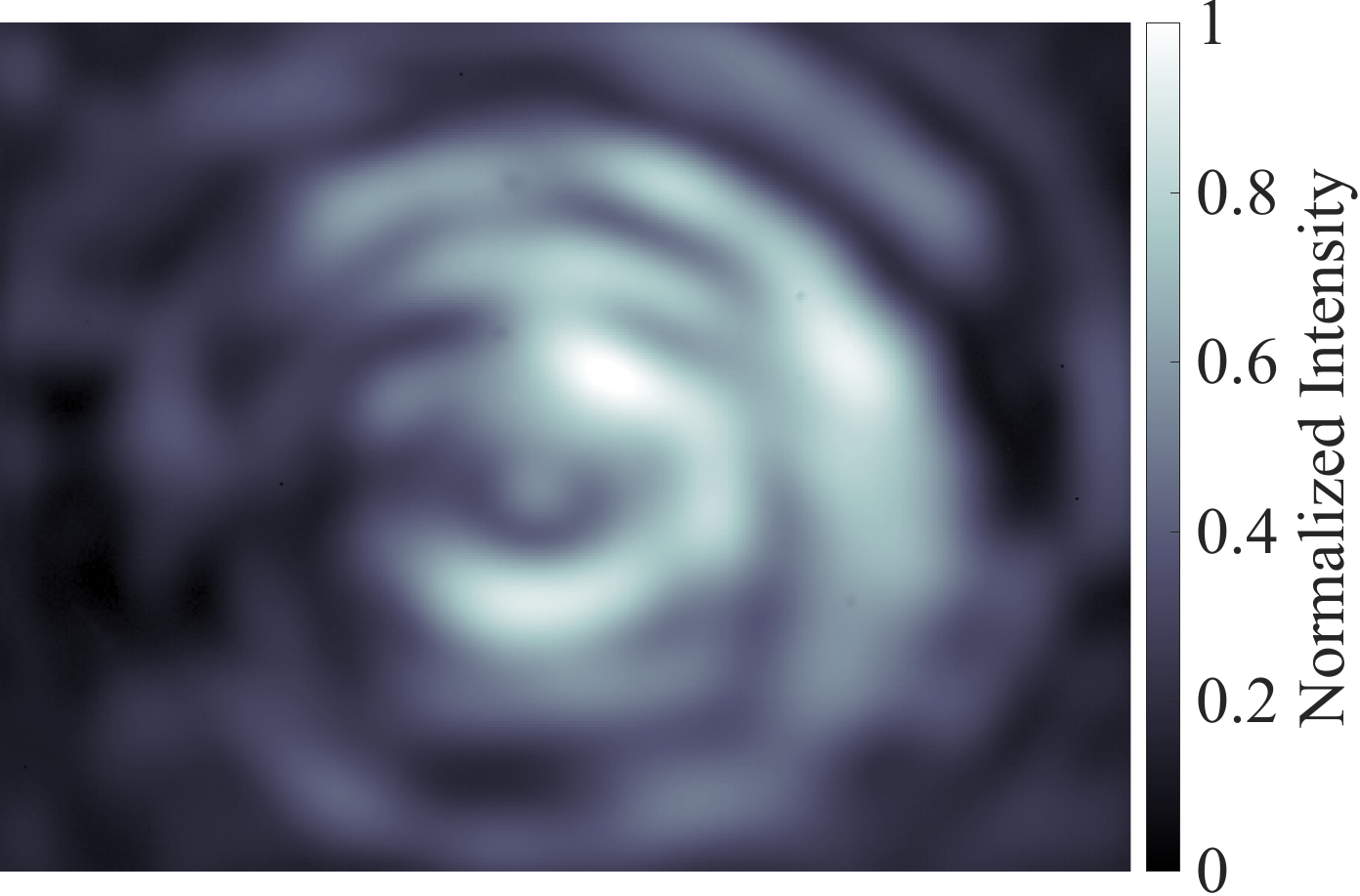}};
    
    \node[white] at (-1.25,-1.3) {{\scriptsize{$C = 0.409$}}};

    \draw[white,ultra thick] 
    (1.4,-1.4) -- (1.84,-1.4);
    \node[white] at (1.6,-1.25) {\bf{\tiny{5 mm}}};
    \end{tikzpicture}
    
    \vspace{-0.3cm}
    \caption{Without Diffuser and Lens}
    \label{p20230823_Diffuser_THzCamera_Beam_1}
    \end{subfigure}
    \begin{subfigure}[b]{0.3\textwidth}
    \begin{tikzpicture}
    \node
    {\adjincludegraphics[width=\textwidth, trim={{0.\width} {0.\height} {.18\width} {0.\height}}, clip]{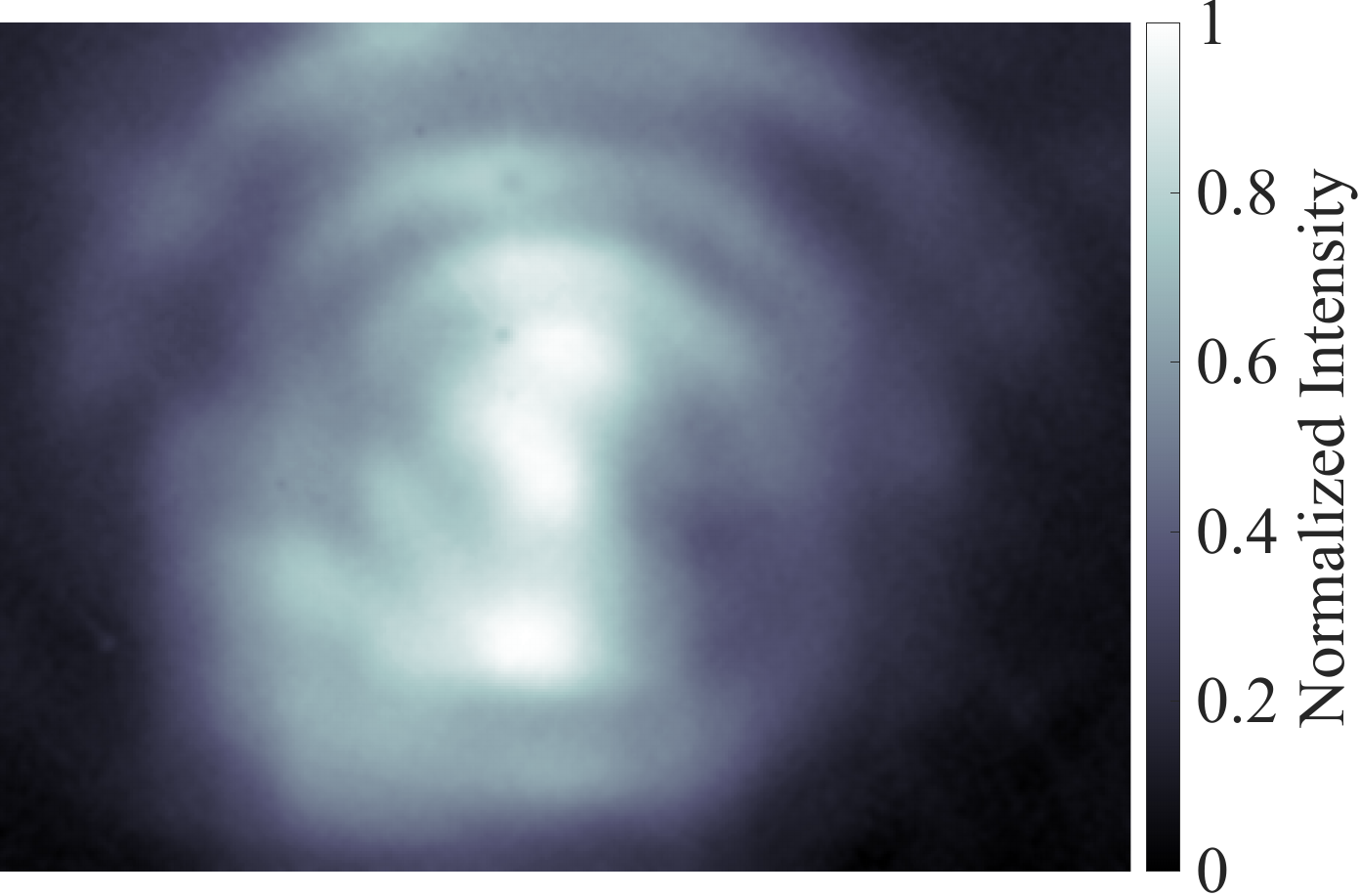}};
    
    \node[white] at (-1.25,-1.3) {{\scriptsize{$C = 0.255$}}};

    \draw[white,ultra thick] 
    (1.4,-1.4) -- (1.84,-1.4);
    \node[white] at (1.6,-1.25) {\bf{\tiny{5 mm}}};
    \end{tikzpicture}
    
    \vspace{-0.3cm}
    \caption{With Diffuser Only}
    \label{p20230823_Diffuser_THzCamera_Beam_2}
    \end{subfigure}
    \begin{subfigure}[b]{0.367\textwidth}
    \begin{tikzpicture}
    \node
    {\includegraphics[width=\textwidth]{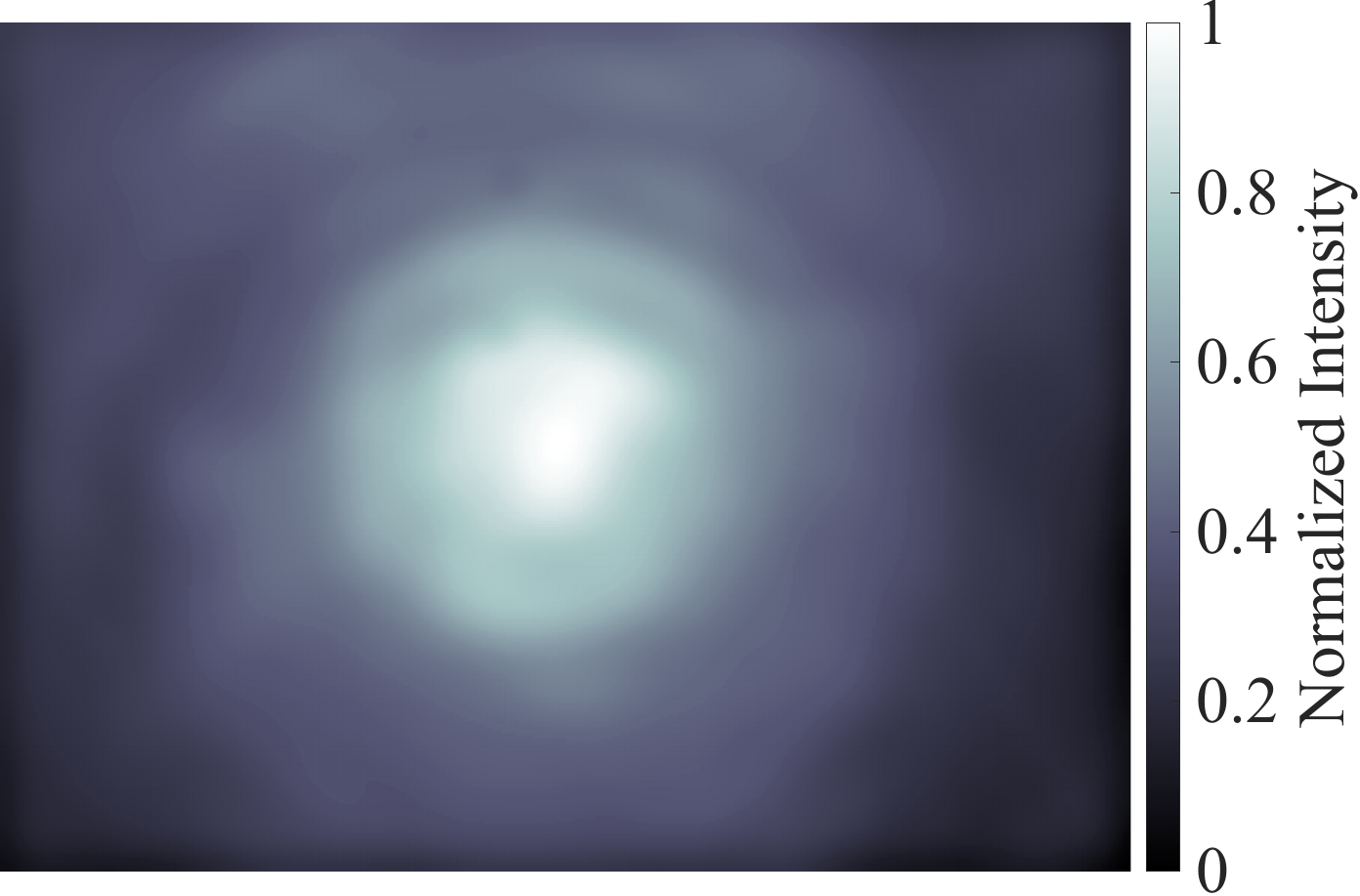}};
    
    \node[white] at (-1.7,-1.3) {{\scriptsize{$C = 0.155$}}};

    \draw[white,ultra thick] 
    (1.05,-1.4) -- (1.49,-1.4);
    \node[white] at (1.25,-1.25) {\bf{\tiny{5 mm}}};
    \end{tikzpicture}
    
    \vspace{-0.3cm}
    \caption{With Diffuser and Lens}
    \label{p20230823_Diffuser_THzCamera_Beam_3}
    \end{subfigure}
    \caption{Illustrations of the illumination pattern and the corresponding speckle contrast value for various experimental configurations:~\subref{p20230823_Diffuser_THzCamera_Beam_1} without diffuser and focusing lens;~\subref{p20230823_Diffuser_THzCamera_Beam_2} with diffuser only (no focusing lens);~\subref{p20230823_Diffuser_THzCamera_Beam_3} with diffuser and focusing lens. }
    \label{p20230823_Diffuser_THzCamera_Beam}
\end{figure}

We now proceed to evaluate the effectiveness of speckle reduction in realistic imaging scenarios using our DaISy on a ``Y''-shaped 3D-printed test object, as depicted in~\cref{p_Img_Y}. 
The strategic selection of a ``Y''-shaped object, with its branching structure with varying angles and contours, makes it ideal for evaluating speckle phenomena across different tilt directions. 
First, the sub-THz imaging result without the diffuser, shown in~\cref{p20230825_Diffuser_THzCamera_Y_1}, reveals that the test object is barely distinguishable, with the image quality significantly impaired by speckle artifacts. 
This highlights the limitations of the system without any speckle mitigation components. 
When only the diffuser is employed, as seen in~\cref{p20230825_Diffuser_THzCamera_Y_2}, the ``Y'' shape of the test object becomes somewhat discernible, yet the image suffers from blurry and lack of clarity. 
Additionally, the edges of the object are distorted, indicating the persistence of diffraction effects. 
The most significant improvement is observed in~\cref{p20230825_Diffuser_THzCamera_Y_3} with the complete configuration (\cref{p_SubTHz_Setup}), incorporating both the diffuser and the focusing lens. 
This configuration markedly improves image contrast, making the edges of the object easily identifiable. 
The clarity achieved in this image starkly contrasts with the previous results, especially compared to the diffuser-only configuration in~\cref{p20230825_Diffuser_THzCamera_Y_2}. 
This comparison underscores the effectiveness of the combined use in achieving speckle-free imaging and delineating finer details. 

\pagebreak

\begin{figure}[htbp]
\centering
    \begin{subfigure}[b]{0.22\textwidth}
    \begin{tikzpicture}
    \node
    {\includegraphics[width=\textwidth]{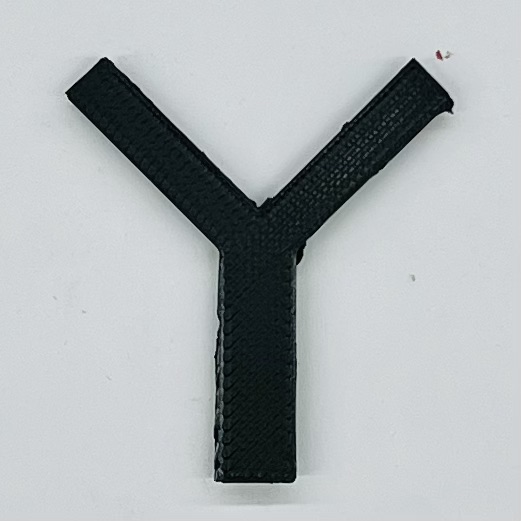}};
    
    \draw[white,ultra thick] 
    (0.9,-1.3) -- (1.34,-1.3);
    \node[white] at (1.1,-1.15) {\bf{\tiny{5 mm}}};
    \end{tikzpicture}
    
    \vspace{-0.25cm}
    \caption{Test Object ``Y''}
    \label{p_Img_Y}
    \end{subfigure}
    \begin{subfigure}[b]{0.222\textwidth}
    \begin{tikzpicture}
    \node
    {\adjincludegraphics[width=\textwidth, trim={{0.\width} {0.\height} {.2\width} {0.\height}}, clip]{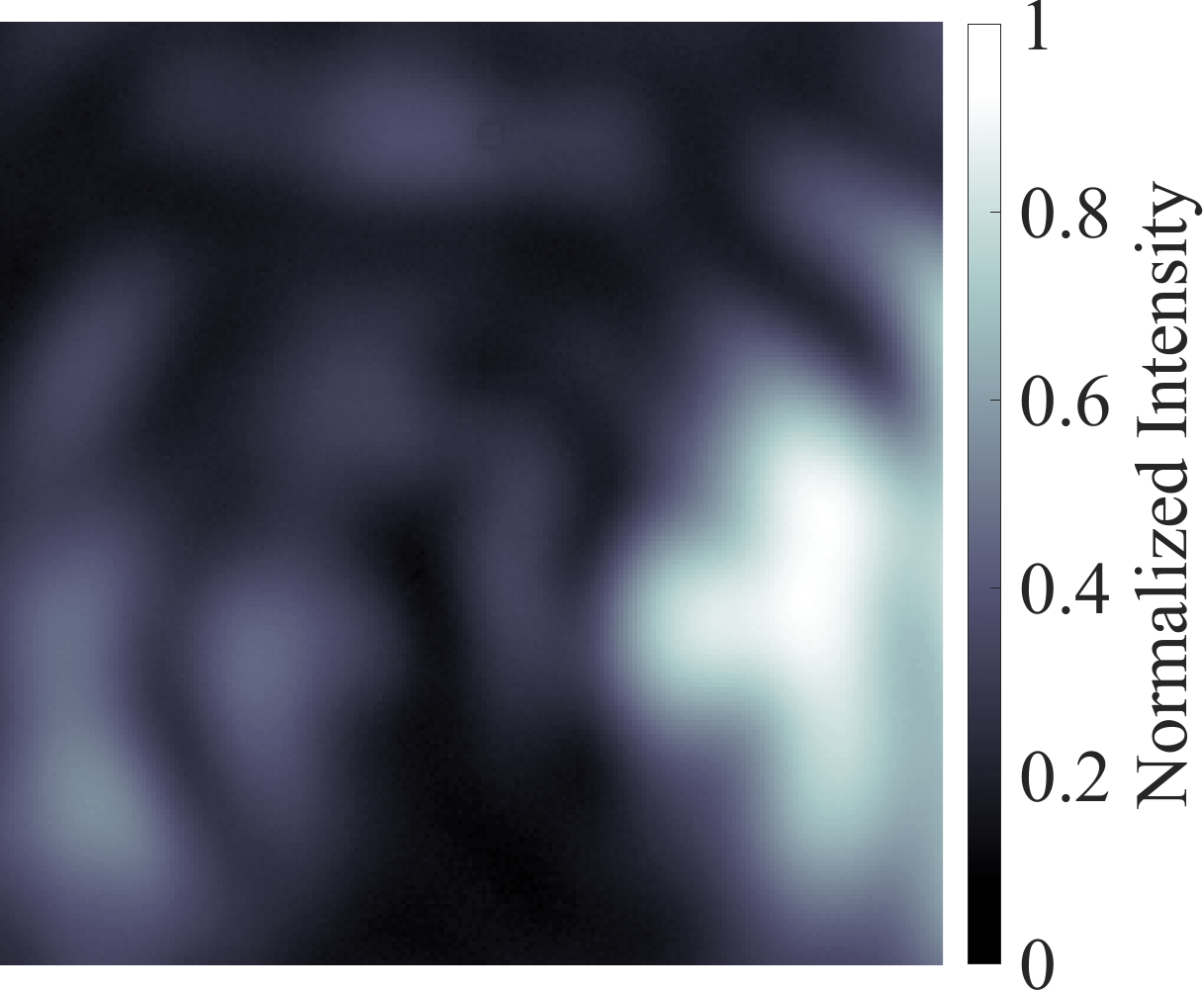}};
    
    \draw[white,ultra thick] 
    (0.9,-1.3) -- (1.34,-1.3);
    \node[white] at (1.1,-1.15) {\bf{\tiny{5 mm}}};
    \end{tikzpicture}
    \vspace{-0.74cm}
    \caption{Without Diffuser}
    \label{p20230825_Diffuser_THzCamera_Y_1}
    \end{subfigure}
    \hspace{-0.15cm}
    \begin{subfigure}[b]{0.222\textwidth}
    \begin{tikzpicture}
    \node
    {\adjincludegraphics[width=\textwidth, trim={{0.\width} {0.\height} {.2\width} {0.\height}}, clip]{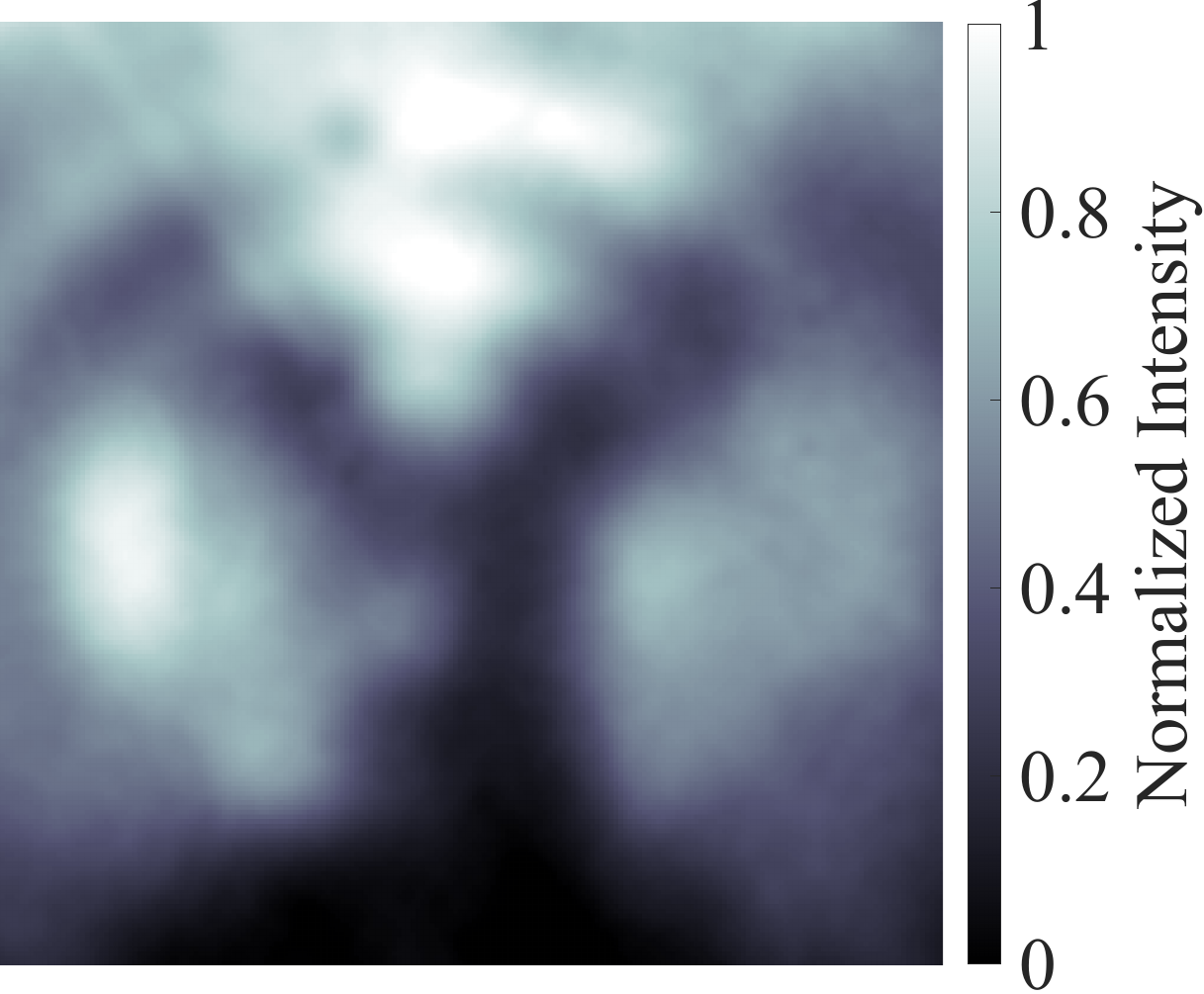}};
    
    \draw[white,ultra thick] 
    (0.9,-1.3) -- (1.34,-1.3);
    \node[white] at (1.1,-1.15) {\bf{\tiny{5 mm}}};
    \end{tikzpicture}
    \vspace{-0.74cm}
    \caption{With Diffuser Only}
    \label{p20230825_Diffuser_THzCamera_Y_2}
    \end{subfigure}
    \hspace{-0.15cm}
    \begin{subfigure}[b]{0.278\textwidth}
    \begin{tikzpicture}
    \node
    {\adjincludegraphics[width=\textwidth, trim={{0.\width} {0.\height} {0.\width} {0.\height}}, clip]{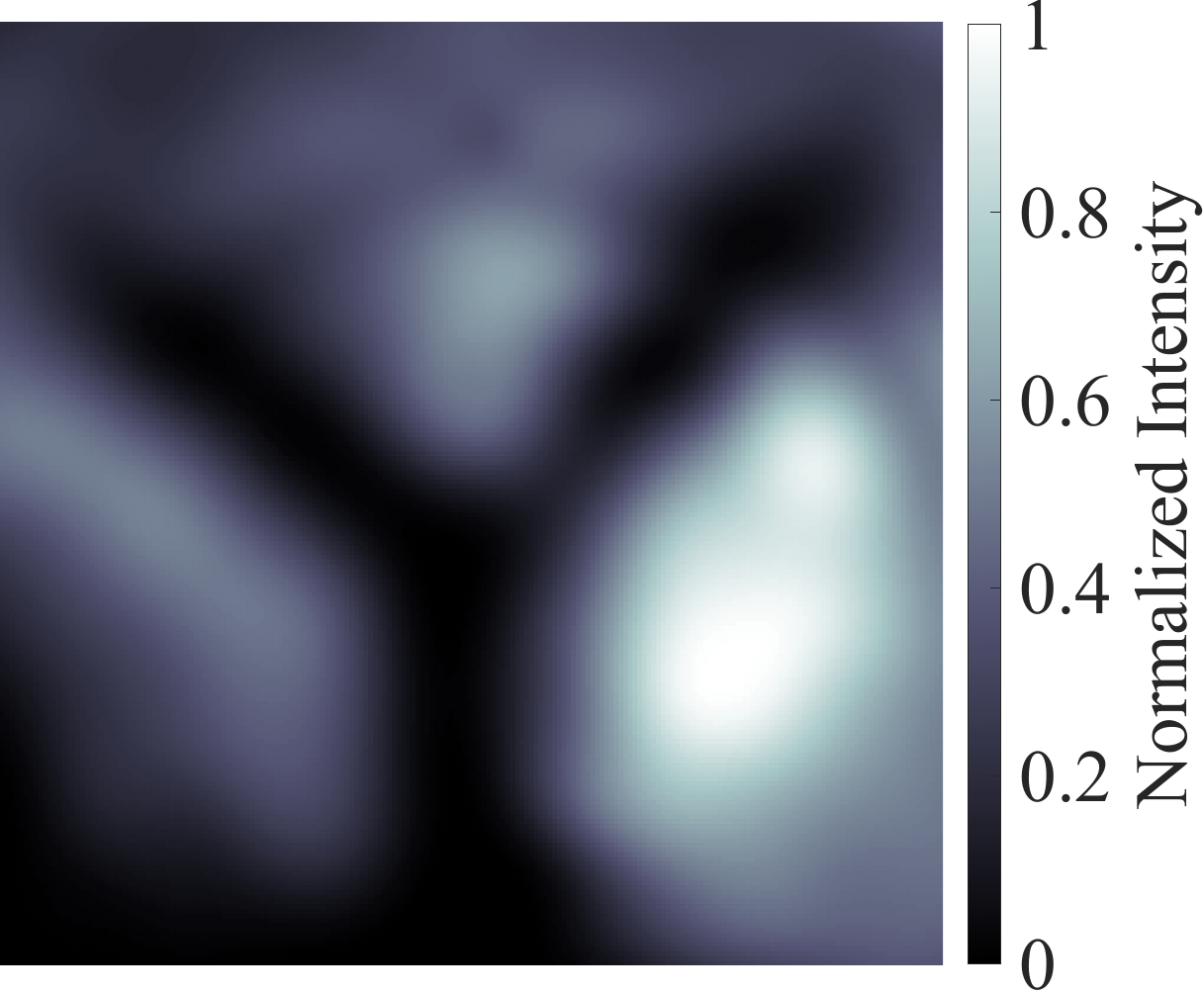}};
    
    \draw[white,ultra thick] 
    (0.55,-1.3) -- (0.99,-1.3);
    \node[white] at (0.75,-1.15) {\bf{\tiny{5 mm}}};
    \end{tikzpicture}
    \vspace{-0.74cm}
    \caption{With Diffuser and Lens}
    \label{p20230825_Diffuser_THzCamera_Y_3}
    \end{subfigure}
    \caption{Illustrations of~\subref{p_Img_Y} the ``Y''-shaped 3D-printed test object for various experimental configurations:~\subref{p20230825_Diffuser_THzCamera_Y_1} without diffuser and focusing lens;~\subref{p20230825_Diffuser_THzCamera_Y_2} with diffuser only (no focusing lens);~\subref{p20230825_Diffuser_THzCamera_Y_3} with diffuser and focusing lens. }
    \label{p20230825_Diffuser_THzCamera_Y}
\end{figure}

\color{reblack}
To quantify the improved image quality from our DaISy system, we employ a comprehensive set of quality metrics outlined in~\cref{tab:metrics}. 
\cref{tab:metrics} showcases the performance of~\cref{p20230825_Diffuser_THzCamera_Y_1}-\subref{p20230825_Diffuser_THzCamera_Y_3} by our DaISy across different configurations: without both a diffuser and a focusing lens, using only a diffuser, and combining both a diffuser and a focusing lens. 
These metrics assess image quality from two perspectives: full-reference quality metrics and no-reference quality metrics. 
For full-reference quality metrics, we compare the acquired images against an ideal reference image, which is derived from the 3D printing file of the ``Y''-shaped object. 
All full reference quality metrics, namely the peak signal-to-noise ratio (PSNR), the structural similarity index measure (SSIM), and the mean-squared error (MSE), show progressive improvements. 
In detail, PSNR jumps significantly from 2.3095 to 7.4017 with the full DaISy setup, indicating a cleaner signal with less noise. 
SSIM increases from 0.2256 to 0.5858, reflecting improved clarity and structural accuracy. 
MSE decreases from 0.5876 to 0.1819, pointing to a lower error between the acquired and reference images. 
Additionally, in a complementary manner, we incorporate no-reference quality metrics, namely the blind/referenceless image spatial quality evaluator (BRISQUE) and the natural image quality evaluator (NIQE), to evaluate the images without a reference image. 
BRISQUE scores decrease from 66.9749 to 63.4722, denoting images that appear more natural and less distorted. 
NIQE scores also improve from 6.5987 to 6.2490, suggesting a quality closer to what the human eye perceives as natural. 
By combining these metrics, we offer a comprehensive evaluation of image quality, highlighting the capability of DaISy to produce detailed and clear images of the ``Y''-shaped specimen without speckle artifacts. 
For a detailed explanation of the quality metrics used in our work, please refer to~\hyperref[supp]{Supplement 1}~\cite{2004_Wang_SSIM_IQM, 2009_Thung_IQM, 2010_Hore_IQM, Mittal_2012_BRISQUE, Mittal_2013_NIQE}. 

\begin{table}[htbp]
\caption{Comparison metrics for the image quality of~\cref{p20230825_Diffuser_THzCamera_Y_1}-\subref{p20230825_Diffuser_THzCamera_Y_3} of the ``Y''-shaped test object by the full-reference perspective: peak signal-to-noise ratio (PSNR), structural similarity index measure (SSIM), and mean-squared error (MSE), as well as the no-reference perspective: blind/referenceless image spatial quality evaluator (BRISQUE) and naturalness image quality evaluator (NIQE) where $\uparrow$ ($\downarrow$): higher (lower) is better. }
\vspace{-0.5cm}
\label{tab:metrics}
\begin{center}
\begin{adjustbox}{width=.98\textwidth}
    \begin{tabular}{crc|c|c}
    \toprule
    \multicolumn{2}{c}{}                              & Without Diffuser & With Diffuser Only & With Diffuser and Lens \\ \hline
    \multirow{3}{*}{Full-Reference} & PSNR ($\uparrow$) & 2.3095 & 4.0357 & \textbf{7.4017} \\
    & SSIM ($\uparrow$) & 0.2256 & 0.3844 & \textbf{0.5858} \\
    & MSE ($\downarrow$) & 0.5876 & 0.3948 & \textbf{0.1819} \\ \hline
    \multirow{2}{*}{No-Reference} & BRISQUE ($\downarrow$) & 66.9749 & 65.4817 & \textbf{63.4722} \\
    & NIQE ($\downarrow$) & 6.5987 & 6.5692 & \textbf{6.2490} \\
    \bottomrule
    \end{tabular}
    \end{adjustbox}
    \end{center}
    \end{table}

\color{black}
Next, for evaluating imaging resolution with DaISy, we utilize the scaled USAF-1951 resolution test, a standard benchmark in imaging system assessment. 
As illustrated in~\cref{p_Img_USAF}, this test features line pairs varying from 11 mm (top) to 7 mm (bottom), providing a comprehensive test of the resolving capabilities for DaISy. 
For the first configuration, the imaging results without using a diffuser or a focusing lens, shown in~\cref{p20230827_Diffuser_THzCamera_USAF_1}, display significant distortion. 
The larger line pairs, measuring 11 mm, appear blurred, achieving a resolution of only 0.9 line pairs per centimeter (LP/cm). 
A considerable improvement in resolution is observed when employing only the diffuser, as depicted in~\cref{p20230827_Diffuser_THzCamera_USAF_2}. 
In this configuration, the system is able to resolve line pairs down to 9 mm, translating to a resolution of 1.11 LP/cm. 
However, it falls short of distinguishing the finer details of smaller line pairs, indicating the need for further improvement in the resolving ability. 
The most significant advancement is achieved with the integration of both the diffuser and focusing lens, as shown in~\cref{p20230827_Diffuser_THzCamera_USAF_3}. 
This complete configuration enables DaISy to perform at its optimal capacity, successfully resolving line pairs as small as 7 mm. 
This corresponds to an improved resolution of 1.428 LP/cm, clearly demonstrating the improved capabilities in achieving speckle-free imaging with superior resolution. 

\begin{figure}[htbp]
\centering
    \begin{subfigure}[b]{0.238\textwidth}
    \hspace{-0.3cm}
    \begin{tikzpicture}
    \node
    {\includegraphics[width=\textwidth]{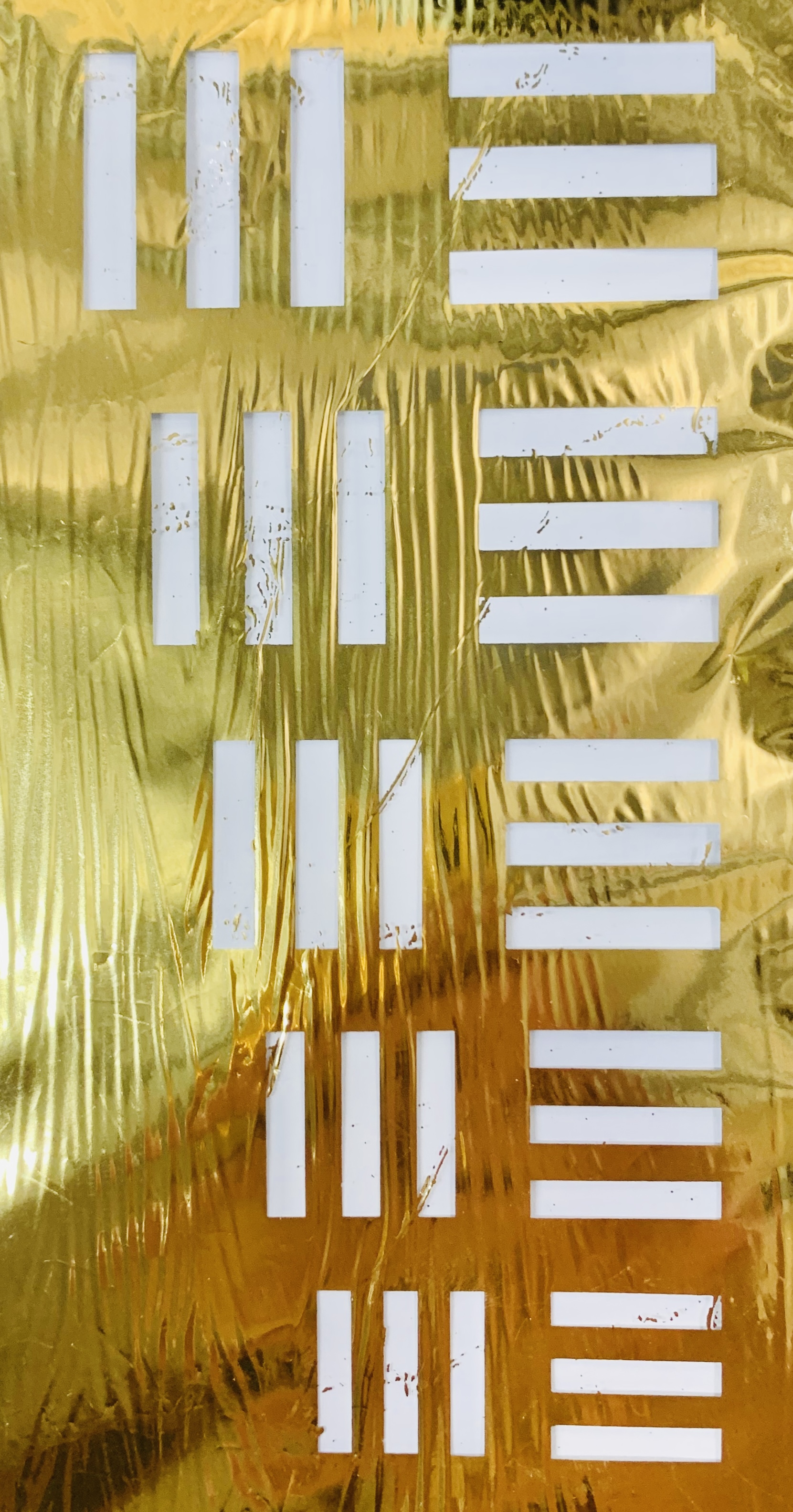}}; 
    
    \draw[white, very thick] 
    (-1.27,1.75) -- (-0.85,1.75);
    \node[white] at (-1.05,1.6) {\bf{\tiny{11 mm}}};

    \draw[white, very thick] 
    (-0.98,0.4) -- (-0.61,0.4);
    \node[white] at (-0.76,0.25) {\bf{\tiny{10 mm}}};

    \draw[white, very thick] 
    (-0.75,-0.83) -- (-0.4,-0.83);
    \node[white] at (-0.53,-0.98) {\bf{\tiny{9 mm}}};

    \draw[white, very thick] 
    (-0.53,-1.92) -- (-0.21,-1.92);
    \node[white] at (-0.3,-2.05) {\bf{\tiny{8 mm}}};

    \draw[white, very thick] 
    (-0.32,-2.84) -- (-0.04,-2.84);
    \node[white] at (-0.1,-2.95) {\bf{\tiny{7 mm}}};
    \end{tikzpicture}
    \vspace{-0.6cm}
    \caption{Resolution Test Object}
    \label{p_Img_USAF}
    \end{subfigure}
    \begin{subfigure}[b]{0.22\textwidth}
    \adjincludegraphics[width=\textwidth, trim={{0.\width} {0.\height} {.2\width} {0.\height}}, clip]{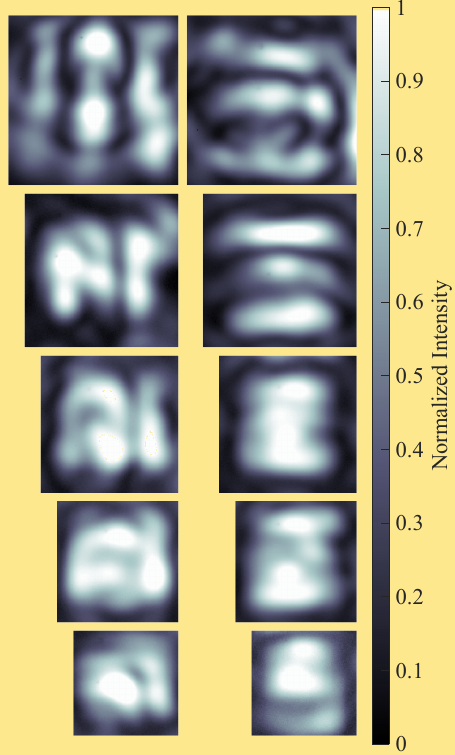}
    \vspace{-0.5cm}
    \caption{Without Diffuser}
    \label{p20230827_Diffuser_THzCamera_USAF_1}
    \end{subfigure}
    \,
    \begin{subfigure}[b]{0.22\textwidth}
    \adjincludegraphics[width=\textwidth, trim={{0.\width} {0.\height} {.2\width} {0.\height}}, clip]{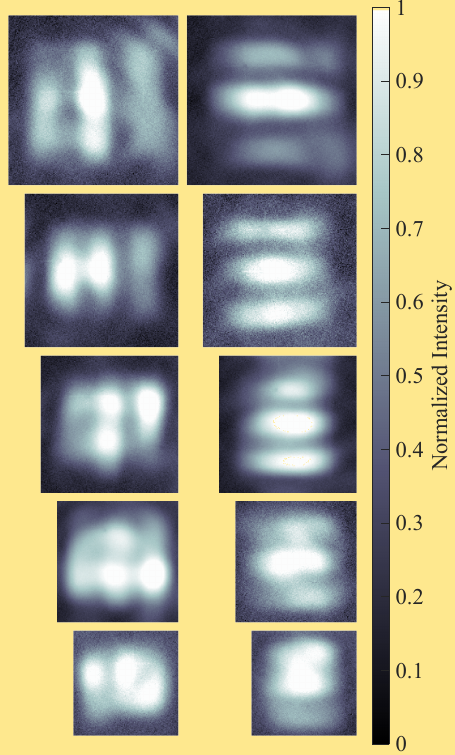}
    \vspace{-0.5cm}
    \caption{With Diffuser Only}
    \label{p20230827_Diffuser_THzCamera_USAF_2}
    \end{subfigure}
    \,
    \begin{subfigure}[b]{0.22\textwidth}
    \adjincludegraphics[width=\textwidth, trim={{0.\width} {0.\height} {.2\width} {0.\height}}, clip]{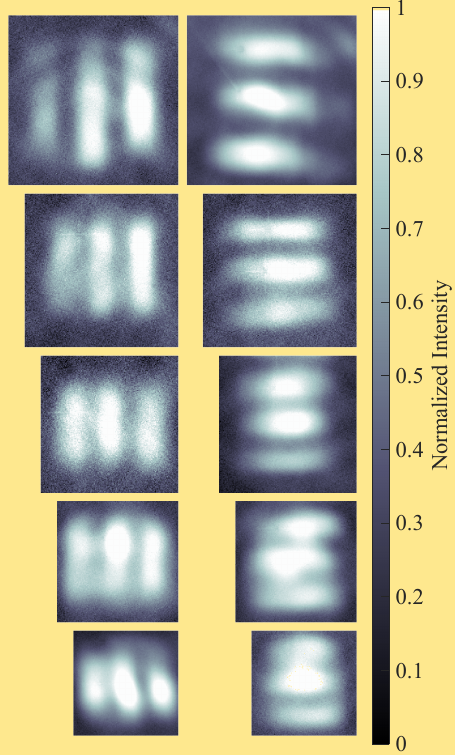}
    \vspace{-0.5cm}
    \caption{With Diffuser and Lens}
    \label{p20230827_Diffuser_THzCamera_USAF_3}
    \end{subfigure}
    \begin{subfigure}[b]{0.055\textwidth}
    \adjincludegraphics[width=\textwidth, trim={{0.8\width} {0.\height} {0.\width} {0.\height}}, clip]{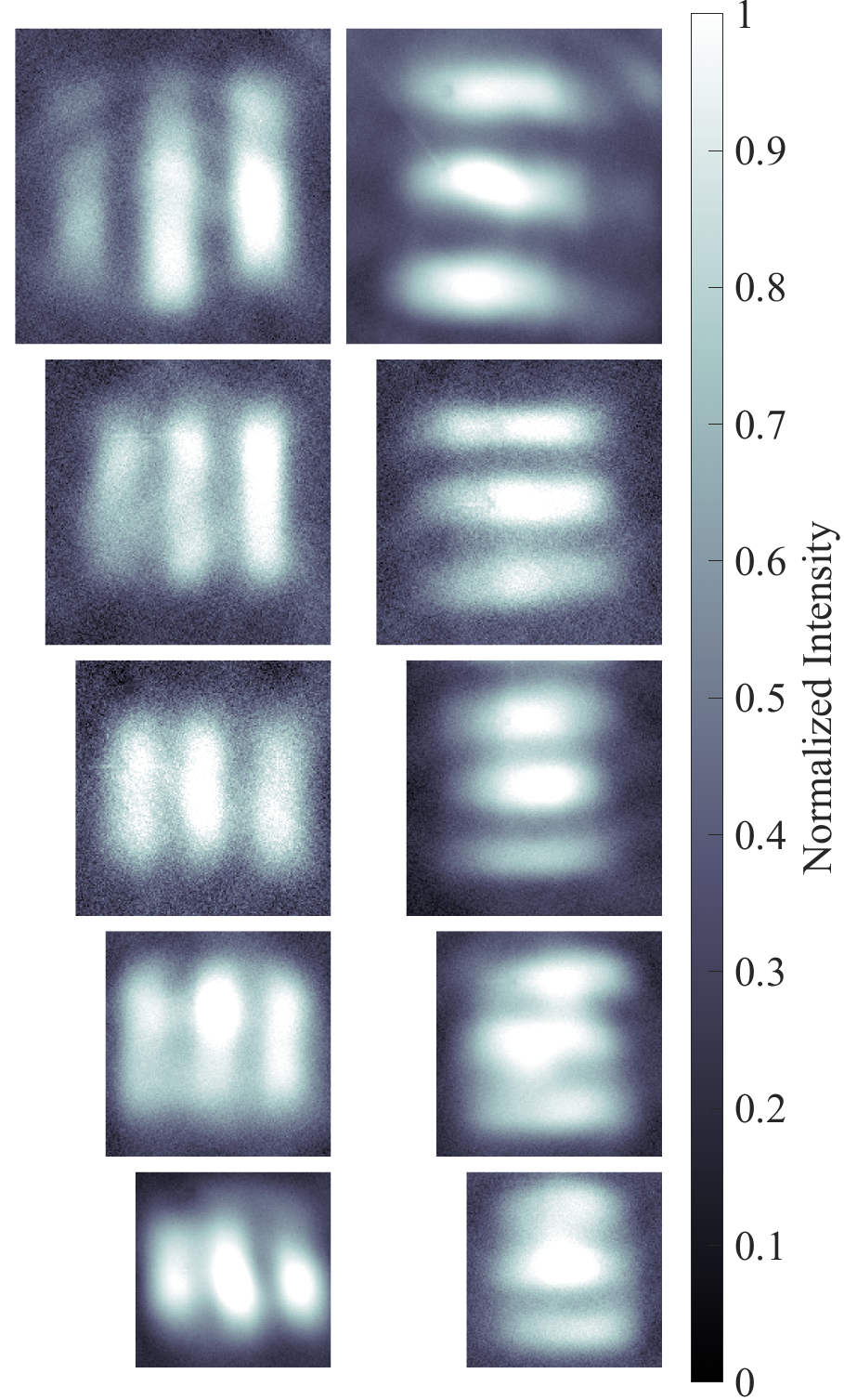}
    \vspace{0.08cm}
    \end{subfigure}
    \caption{Illustrations of~\subref{p_Img_USAF} scale USAF-1951 resolution test object, where the scale object displaying resolution ranging from 0.9 LP/cm to 1.428 LP/cm, fabricated by a gold stamping printing process, for various experimental configurations:~\subref{p20230827_Diffuser_THzCamera_USAF_1} without diffuser and focusing lens;~\subref{p20230827_Diffuser_THzCamera_USAF_2} with diffuser only (no focusing lens);~\subref{p20230827_Diffuser_THzCamera_USAF_3} with diffuser and focusing lens. }
    \label{p20230827_Diffuser_THzCamera_USAF}
\end{figure}

Last, the capabilities of our DaISy extend beyond laboratory testing and find practical applications in real-world scenarios. 
Particularly in the scenario of security scanning, sub-THz imaging stands out due to its unique ability to peer inside concealed objects with exceptional precision. 
To illustrate its practical utility, we conduct a demonstration where a small sculpture knife is concealed in the pocket of a standard nylon jacket, as exemplified in~\cref{p_Img_KnifeInCloth}. 
The optical image of this concealed sculpture knife is shown in~\cref{p_Img_Knife}, akin to that of a pen. 
Leveraging the DaISy, we perform sub-THz imaging of the jacket, and the results, displayed in~\cref{p20230826_Diffuser_THzCamera_Knife_1}, are nothing short of impressive. 
The semblance of the sculpture knife is easy to identify and profile despite its diminutive size of approximately 5 mm. 
Remarkably, even when concealed within the fabric of the jacket, our system offers unparalleled visibility. 
These outcomes underscore the potential of DaISy as a valuable tool in security scanning and inspections, where the precise detection of concealed items is of paramount importance. 

\begin{figure}[htbp]
\centering
    \begin{minipage}[b]{0.45\textwidth}
    \begin{subfigure}{\textwidth}
    \begin{tikzpicture}
    \node
    {\includegraphics[width=\textwidth]{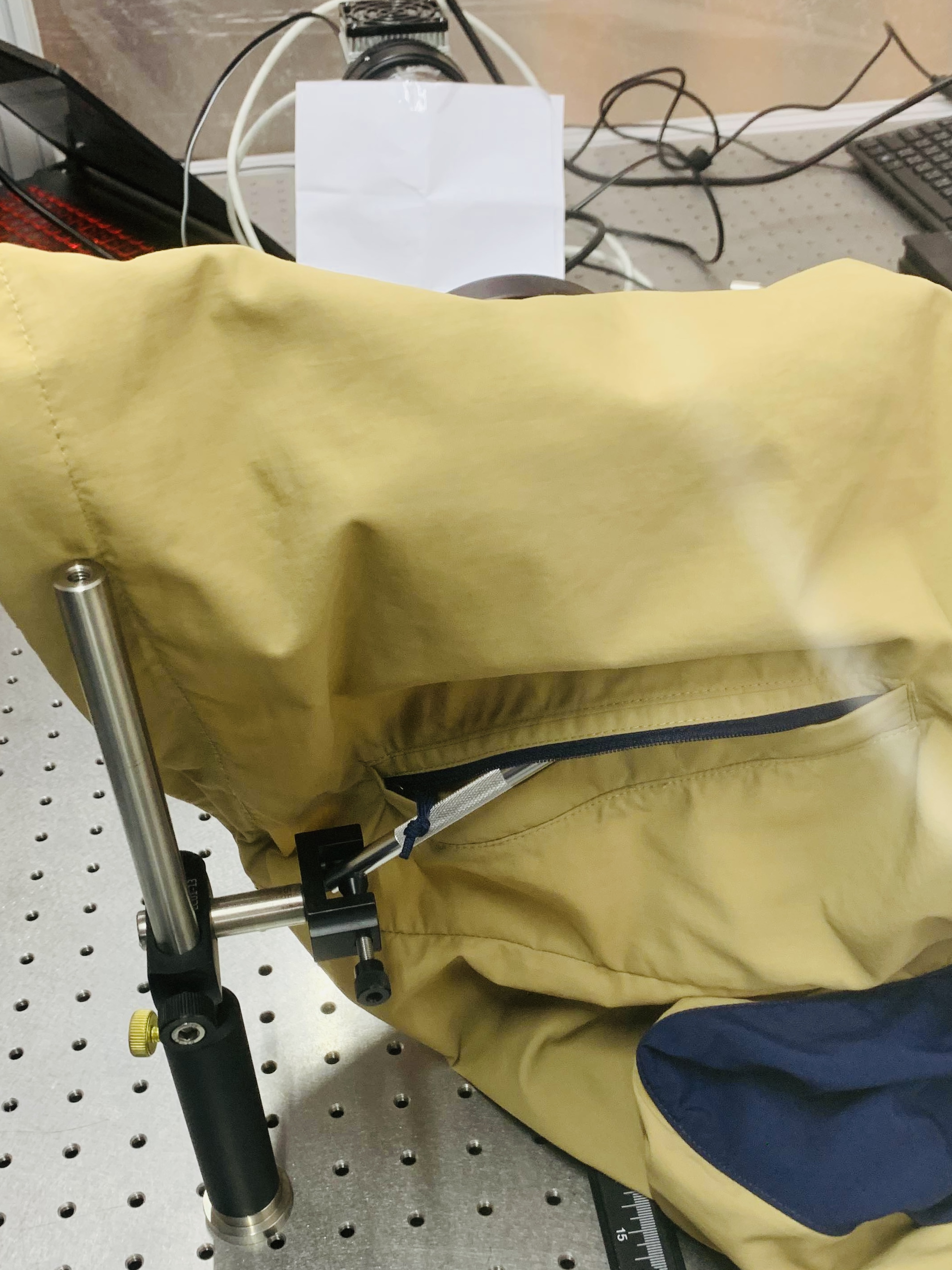}};
    
    \draw[white, ultra thick, rounded corners] 
    (1.5,-0.5) rectangle (0.1,0.7);
    \node[white] at (0.8,0.2) {\bf{\scriptsize{Scanning}}};
    \node[white] at (0.8,-0.0) {\bf{\scriptsize{Area}}};
    \end{tikzpicture}
    \vspace{-0.65cm}
    \caption{}
    \label{p_Img_KnifeInCloth}
    \end{subfigure}
    \end{minipage}
    \,
    \begin{minipage}[b]{0.45\textwidth}
    \begin{subfigure}[b]{.85\textwidth}
    \includegraphics[width=\textwidth]{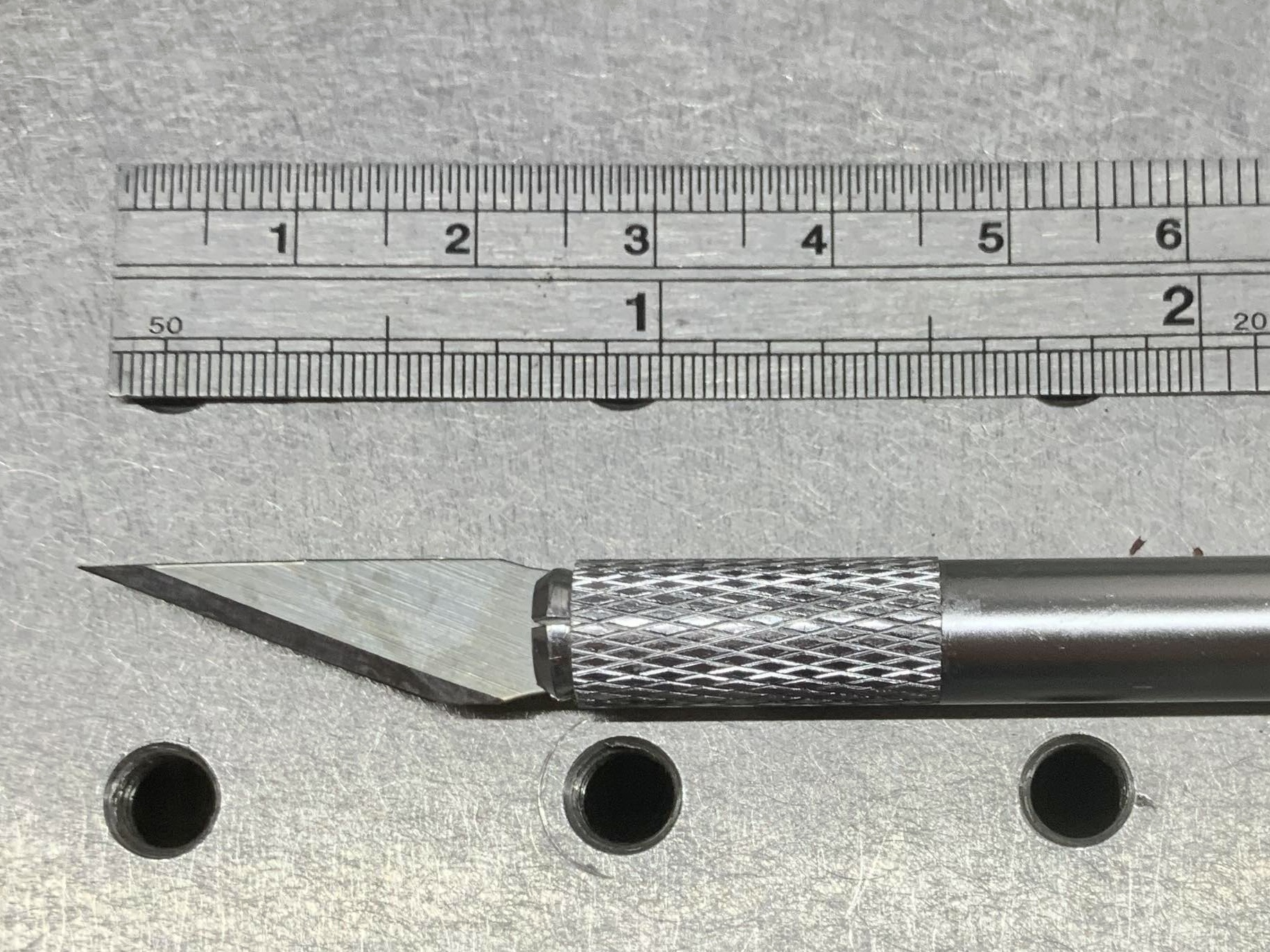}
    \vspace{-0.6cm}
    \caption{}
    \vspace{-0.15cm}
    \label{p_Img_Knife}
    \end{subfigure}
    \\
    \begin{subfigure}[b]{.95\textwidth}
    \hspace{-0.15cm}
    \begin{tikzpicture}
    \node
    {\includegraphics[width=\textwidth]{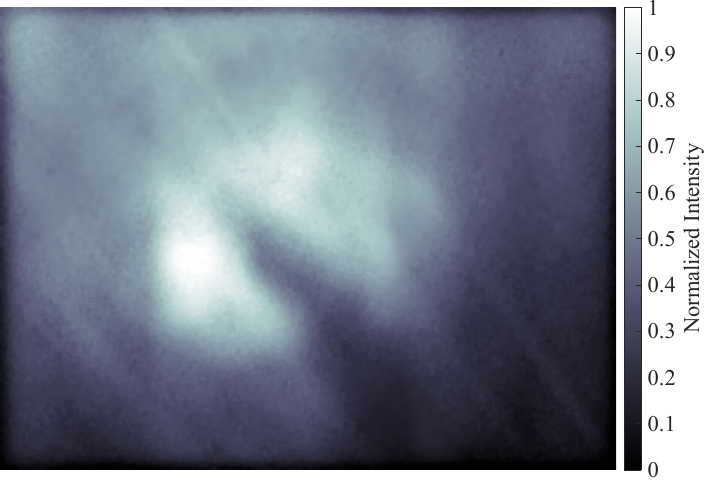}};
    
    \draw[white,ultra thick] 
    (1.4,-1.6) -- (1.84,-1.6);
    \node[white] at (1.60,-1.45) {\bf{\tiny{5 mm}}};
    \end{tikzpicture}
    \vspace{-0.7cm}
    \caption{}
    \label{p20230826_Diffuser_THzCamera_Knife_1}
    \end{subfigure}
    \end{minipage}
    \caption{DaISy demo in security scanning application:~\subref{p_Img_KnifeInCloth} the concealed sculpture knife within a nylon jacket pocket;~\subref{p_Img_Knife} the isolated sculpture knife;~\subref{p20230826_Diffuser_THzCamera_Knife_1} visualization of the sculpture knife using our DaISy with diffuser and focusing lens. }
    \label{p20230826_Diffuser_THzCamera_Knife}
\end{figure}

\section{Conclusion}

The development and implementation of DaISy represent a major advancement in addressing the speckle artifacts in coherent sub-THz imaging. 
At its core, this advancement is driven by integrating a specially designed THz diffuser, made from easily malleable and cost-effective epoxy resin (AB glue), and complemented by a focusing lens. 
This strategic combination effectively breaks the beam coherence, significantly reducing speckle artifacts that arise from the undesired diffraction phenomenon. 
In this work, we establish a theoretical framework in coherence theory to facilitate the design and validation of THz components. 
This framework enables a systematic and quantitative assessment of speckle phenomena associated with different coherence levels. 
Initially, we developed this framework in a configuration employing only a diffuser to analyze speckle contrast. 
Subsequently, we expand this coherence framework to incorporate the wave propagation behavior resulting from the introduction of a diffuser-lens configuration. 
These comprehensive analyses aim to evaluate the efficacy of speckle mitigation, revealing significant enhancements in both image quality and resolution. 
The achievement of a 1.428 LP/cm in resolution is a testament to the efficacy of this approach, pushing the boundaries of a few-cycle resolution in sub-THz imaging. 
Moreover, the practical application of DaISy, particularly demonstrated in security scanning, highlights the versatile potential of sub-THz waves in various fields. 
The ability to discern concealed objects with improved clarity and precision underlines the value of DaISy in real-world scenarios, opening doors to its use in diverse areas ranging from non-invasive sensing and imaging to industrial inspection to bioinformatics. 

However, one of the limitations of the current demonstrated system is it can only capture the spatial information of objects, lacking the capacity to collect multi-dimensional information for further analysis. 
Recent advancements in machine learning techniques have shown great promise in extracting phase information through amplitude information~\cite{Hao_2021_Auto-focusing, Yichen_2018_Extended, Yuan_2019_FourierImaging, Xiang_2022_amplitudephase}. 
By incorporating data-driven methodologies within the DaISy architecture, it is expected to have real-time, highly precise, cost-effective, and low-complexity THz imaging systems in the near future. 
This advancement holds great promise for expanding the application scopes of THz imaging, akin to its neighboring electromagnetic wave bands. 


\appendix
\section{Derivation of~\cref{eqn:avg_MGF}}
\label{appendix:derivation}

\begin{theorem}
\label{thm_avg_cf}
    If $\phi$ is a random variable of a Gaussian (normal) distribution $\phi \sim \mathcal{N}(\mu_{\phi} = 0, \sigma_{\phi}^{2})$ with zero mean $\mu_{\phi}$ and standard deviation $\sigma_{\phi}$ (variance $\sigma_{\phi}^{2}$), there exists
    \[
    \mathbb{E}\{e^{j \Delta \phi}\}
    \, \textcolor{reblack}{=} \, 
    \langle
    e^{j \Delta \phi}
    \rangle
    \, \textcolor{reblack}{=} \, 
    \overline{e^{j \Delta \phi}}
    =
    e^{- \sigma^{2}_{\phi} [1 - \mu_{\phi}(\Delta \alpha, \Delta \beta)]}. 
    \]
\end{theorem}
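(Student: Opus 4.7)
The plan is to interpret $\Delta\phi$ in the spirit of the surrounding analysis as the phase difference $\Delta\phi = \phi_{d}(\alpha_{1} - vt, \beta_{1}) - \phi_{d}(\alpha_{2} - vt - v\Delta t, \beta_{2})$ sampled from the stationary Gaussian random process $\phi_{d}$ with zero mean and variance $\sigma_{\phi}^{2}$, then reduce the claim to the standard characteristic-function identity for a scalar Gaussian. The core observation is that any finite linear combination of samples drawn from a jointly Gaussian process is itself Gaussian, so $\Delta\phi$ is a zero-mean scalar Gaussian whose variance is inherited from the two-point covariance structure of $\phi_{d}$.

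First, I would compute the variance of $\Delta\phi$. Expanding $\mathbb{E}[(\phi_{1}-\phi_{2})^{2}]$ and using the wide-sense stationarity of $\phi_{d}$ together with the definition of the normalized autocorrelation $\mu_{\phi}$ gives
\begin{equation*}
\sigma_{\Delta\phi}^{2}
= \mathbb{E}[\phi_{1}^{2}] + \mathbb{E}[\phi_{2}^{2}] - 2\,\mathbb{E}[\phi_{1}\phi_{2}]
= 2\sigma_{\phi}^{2} - 2\sigma_{\phi}^{2}\mu_{\phi}(\Delta\alpha,\Delta\beta)
= 2\sigma_{\phi}^{2}\bigl[1 - \mu_{\phi}(\Delta\alpha,\Delta\beta)\bigr].
\end{equation*}
Here the time variable $vt$ drops out by stationarity, so only the spatial lags $(\Delta\alpha,\Delta\beta)$ survive; the $v\Delta t$ contribution can be absorbed into $\Delta\alpha$ exactly as in~\cref{eqn:avg_MGF}.

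Next I would invoke the characteristic function of a scalar Gaussian: for $X\sim\mathcal{N}(0,\sigma_{X}^{2})$ one has $\mathbb{E}[e^{jX}] = e^{-\sigma_{X}^{2}/2}$, which follows from completing the square in the Gaussian integral $\frac{1}{\sqrt{2\pi}\sigma_{X}}\int_{-\infty}^{\infty} e^{jx}e^{-x^{2}/(2\sigma_{X}^{2})}\,\mathrm{d}x$. Applying this with $X = \Delta\phi$ and substituting the variance from the previous step yields
\begin{equation*}
\mathbb{E}\bigl[e^{j\Delta\phi}\bigr]
= e^{-\sigma_{\Delta\phi}^{2}/2}
= e^{-\sigma_{\phi}^{2}[1 - \mu_{\phi}(\Delta\alpha,\Delta\beta)]},
\end{equation*}
which is the claimed identity, and this is also the ensemble-average interpretation of the time average $\langle\,\cdot\,\rangle$ under the ergodicity that is implicit when~\cref{eqn:avg_MGF} replaces the time average by an expectation.

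The only genuine subtlety, rather than an obstacle, is notational: the symbol $\mu_{\phi}$ is overloaded between the (zero) mean of the marginal distribution and the normalized autocorrelation function of the process, and one must be careful that the Gaussianity used in the final step is the Gaussianity of the scalar $\Delta\phi$, which in turn relies on the \emph{joint} Gaussianity of $\phi_{d}$ at the two sample points, not merely on the marginal Gaussian assumption stated in the hypothesis. I would make this joint Gaussianity explicit at the outset so that the linear combination $\phi_{1}-\phi_{2}$ is unambiguously Gaussian before the characteristic-function formula is applied.
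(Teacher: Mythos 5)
Your proposal is correct and follows essentially the same route as the paper's own proof: compute $\sigma_{\Delta\phi}^{2} = 2\sigma_{\phi}^{2}[1-\mu_{\phi}(\Delta\alpha,\Delta\beta)]$ from the two-point covariance and then apply the Gaussian characteristic-function identity $\mathbb{E}\{e^{jX}\}=e^{-\sigma_{X}^{2}/2}$ obtained by completing the square. Your explicit remark that the joint Gaussianity of $\phi_{d}$ at the two sample points (not merely the marginal law) is what makes $\Delta\phi$ Gaussian is a small but genuine improvement over the paper's bare assertion that ``$\Delta\phi$ is also a Gaussian random variable.''
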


\begin{proof}
    
    Since $\phi$ is a Gaussian (normal) random variable $\phi \sim \mathcal{N}(\mu_{\phi} = 0, \sigma_{\phi}^{2})$, its Gaussian probability density function (PDF) can be formulated as
    \begin{eqnarray*}
        p(\phi)
        \, \textcolor{reblack}{=} \, 
        \frac{1}{\sqrt{2 \pi \sigma_{\phi}^{2}}} 
        e^{-\frac{(\phi - \mu_{\phi})^{2}}{2 \sigma_{\phi}^{2}}}
        \xlongequal{\mu_{\phi} = 0}
        \frac{1}{\sqrt{2 \pi \sigma_{\phi}^{2}}} 
        e^{-\frac{\phi^{2}}{2 \sigma_{\phi}^{2}}}. 
    \end{eqnarray*}
    Note that $\Delta \phi$ is also a Gaussian random variable. 

    The characteristic function of above PDF $p(\phi)$ is the complex conjugate of the continuous Fourier transform of $p(\phi)$
    \begin{eqnarray*}
    \begin{aligned}
        \varphi_{\phi} 
        \, \textcolor{reblack}{=} \, 
        \mathbb{E}\{e^{j \phi}\}
        \, \textcolor{reblack}{=} \, 
        \langle
        e^{j \phi}
        \rangle
        \, \textcolor{reblack}{=} \, 
        \overline{e^{j \phi}}
        & =
        \frac{1}{\sqrt{2 \pi \sigma_{\phi}^{2}}} 
        \int_{-\infty}^{\infty}
        \left[
        e^{-\frac{(\phi - \mu_{\phi})^{2}}{2 \sigma_{\phi}^{2}}}
        e^{j \phi}
        \right]
        \, \mathrm{d}\phi
        \\
        & 
        = 
        \frac{1}{\sqrt{2 \pi \sigma_{\phi}^{2}}} 
        \int_{-\infty}^{\infty}
        \left\{
        e^{
        -\frac{1}{2\sigma_{\phi}^{2}}
        \left[
        \phi - (\mu_{\phi} + j\sigma_{\phi}^{2})
        \right]^{2}
        -
        \frac{1}{2} \sigma_{\phi}^{2}
        +
        j \mu_{\phi}
        }
        \right\}
        \, \mathrm{d}\phi
        \\
        & 
        =
        e^{-\frac{1}{2}\sigma_{\phi}^{2} + j\mu_{\phi}}
        \frac{1}{\sqrt{2 \pi \sigma_{\phi}^{2}}} 
        \int_{-\infty}^{\infty}
        \left\{
        e^{
        -\frac{1}{2\sigma_{\phi}^{2}}
        \left[
        \phi - (\mu_{\phi} + j\sigma_{\phi}^{2})
        \right]^{2}
        }
        \right\}
        \, \mathrm{d}\phi
        \\
        & 
        \xlongequal{\mathrm{by \, Gaussian \, integral}}
        e^{-\frac{1}{2}\sigma_{\phi}^{2} + j\mu_{\phi}}
        \xlongequal{\mu_{\phi} = 0}
        e^{-\frac{1}{2}\sigma_{\phi}^{2}}. 
    \end{aligned}
    \end{eqnarray*}
    Then, there exists
    \begin{eqnarray*}
        \varphi_{\Delta \phi} 
        \, \textcolor{reblack}{=} \, 
        \mathbb{E}\{e^{j\Delta \phi}\}
        =
        e^{-\frac{1}{2}\sigma_{\Delta \phi}^{2}}
        =
        e^{- \sigma^{2}_{\phi} [1 - \mu_{\phi}(\Delta \alpha, \Delta \beta)]} 
    \end{eqnarray*}
    where
    \begin{eqnarray*}
        \sigma_{\Delta \phi}^{2}
        =
        \mathbb{E}\{(\Delta \phi)^{2}\}
        =
        \mathbb{E}\{\phi_{1}^{2}\}
        +
        \mathbb{E}\{\phi_{2}^{2}\}
        -
        2 \mathbb{E}\{\phi_{1}\phi_{2}\}
        =
        \sigma_{\phi}^{2} 
        +
        \sigma_{\phi}^{2} 
        -
        2 \sigma_{\phi}^{2} \mu_{\phi}(\Delta \alpha, \Delta \beta)
    \end{eqnarray*}
    with
    \begin{eqnarray*}
    \begin{aligned}
        \mathbb{E}\{\phi_{1} \phi_{2}\}
        & \, \textcolor{reblack}{=} \, 
        \mathbb{E}\{\phi_{1}(\alpha, \beta) \phi_{2}(\alpha + \Delta \alpha, \beta + \Delta \beta)\} \\
        & \,\, =
        \mathbb{E}\{\phi_{1}(\alpha, \beta) \phi_{2}(\alpha, \beta)\}
        \mathbb{E}\{\phi_{2}(\Delta \alpha, \Delta \beta)\} \\
        & \,\, =
        \sigma_{\phi}^{2} \mu_{\phi}(\Delta \alpha, \Delta \beta), 
    \end{aligned}
    \end{eqnarray*}
    which is the autocorrelation function
    \begin{eqnarray*}
        \mu_{\phi}(\Delta \alpha, \Delta \beta)
        =
        \frac
        {\mathbb{E}\{\phi_{1}(\alpha, \beta) \phi_{2}(\alpha + \Delta \alpha, \beta + \Delta \beta)\}}
        {\sigma_{\phi}^{2}}. 
    \end{eqnarray*}
    
\end{proof}


\begin{backmatter}
\bmsection{Funding}National Science and Technology Council, Taiwan (NSTC 112-2221-E-007-089-MY3). 

\bmsection{Acknowledgments}Y. Zhang is grateful for partial support from the UoL-NTHU Dual PhD Programme. 
The authors thank Mr Pierre Talbot (INO) and their NTHU colleagues, including Professors Yi-Chun Liu, Kai-Ming Feng and Chia-Wen Lin, along with fellows Ms Yun-Ting Tseng, Mr Tsung-Han Wu, Mr Yi-Chun Hung, Mr Seyed Mostafa Latifi, Ms Vinsa Kharisma, Ms Shaghayegh Afshar, Mr Pouya Torkaman, and Mr Chia-Ming Mai for their valuable support and insightful suggestions. 

\bmsection{Disclosures}The authors declare no conflicts of interest. 

\bmsection{Data Availability}Data underlying the results presented in this paper are not publicly available at this time but may be obtained from the authors upon reasonable request. 

\bmsection{Supplemental document}\phantomsection\label{supp}See \href{https://doi.org/10.6084/m9.figshare.25328746}{Supplement 1} for supporting content. 

\end{backmatter}

\bibliography{2_refs.bib}

\end{document}